\newtheorem{theorem}{Theorem}
\newtheorem{lemma}{Lemma}
\newtheorem{corollary}{Corollary}
\newtheorem{proposition}{Proposition}
\definecolor{bblue}{rgb}{0.2, 0.4, 0.8}
\definecolor{bgreen}{rgb}{0.2, 0.8, 0.2}
\definecolor{bred}{rgb}{0.8, 0.2, 0.2}
\definecolor{lgreen}{rgb}{0.0, 0.48, 0.0}
\definecolor{lpurple}{rgb}{0.48, 0.0, 0.48}
\tikzset{
	treenode/.style = {align=center, inner sep=0pt, text centered,
		font=\sffamily},
	arn_nn/.style = {treenode, circle, bblue, draw=bblue, 
		fill=bblue!10,
		minimum width=0.5em, minimum height=0.5em
	},
	arn_n/.style = {treenode, circle, bblue, draw=bblue, 
		text width=2.0em, very thick, 
		fill=bblue!10},
	arn_g/.style = {treenode, circle, bgreen, draw=bgreen, 
		text width=1.5em, very thick,
		fill=bblue!10},
	arn_r/.style = {treenode, circle, bred, draw=bred, 
		text width=2.0em, very thick,
		fill=bred!10},
	arn_x/.style = {treenode, rectangle, draw=black,
		minimum width=0.5em, minimum height=0.5em}
}
\begin{document}
	
\title{Synchronization Problems in Automata\\ without Non-trivial Cycles}
\author{Andrew Ryzhikov$^{1, 2}$}
\date{%
	$^1$Universit\'e Grenoble Alpes, Laboratoire G-SCOP, 38031 Grenoble, France\\%
	$^2$United  Institute of Informatics Problems  of NASB, 
	220012 Minsk, Belarus\\
	{\tt ryzhikov.andrew@gmail.com}
}
\maketitle

\begin{abstract}
	We study the computational complexity of various problems related to synchronization of weakly acyclic automata, a subclass of widely studied aperiodic automata. We provide upper and lower bounds on the length of a shortest word synchronizing a weakly acyclic automaton or, more generally, a subset of its states, and show that the problem of approximating this length is hard. We investigate the complexity of finding a synchronizing set of states of maximum size. We also show inapproximability of the problem of computing the rank of a subset of states in a binary weakly acyclic automaton and prove that several problems related to recognizing a synchronizing subset of states in such automata are NP-complete.
\end{abstract}

\section{Introduction}

The concept of synchronization is widely studied in automata theory and has a lot of different applications in such areas as manufacturing, coding theory, biocomputing, semigroup theory and many others \cite{Volkov2008}. Let $A = (Q, \Sigma, \delta)$ be a complete deterministic finite automaton (which we simply call an {\em automaton} in this paper), where $Q$ is a set of states, $\Sigma$ is a finite alphabet and $\delta: Q \times \Sigma \to Q$ is a transition function. Note that our definition of an automaton does not include initial and accepting states. The function $\delta$ can be naturally extended to a mapping $Q \times \Sigma^* \to Q$, which we also denote as $\delta$, in the following way: for $x \in \Sigma$ and $a \in \Sigma^*$ we recursively set $\delta(q, xa) = \delta(\delta(q, x), a)$. An automaton is called {\it synchronizing} if there exists a word that maps all its states to a fixed state. Such word is called a {\em synchronizing word}. A state $q \in Q$ is called a {\em sink state} if all letters from~$\Sigma$~map~$q$ to itself.

In this paper synchronization of weakly acyclic automata is studied. A {\em simple cycle} in an automaton $A = (Q, \Sigma, \delta)$ is a sequence $q_1, \ldots, q_k$ of its states such that all the states in the sequence are different and there exist letters $x_1, \ldots, x_k \in \Sigma$ such that $\delta(q_i, x_i) = q_{i + 1}$ for $1 \le i \le k - 1$ and $\delta(q_k, x_k) = q_1$. A simple cycle is a {\em self-loop} if it consists of only one state. An automaton is called {\em weakly acyclic} if all its simple cycles are self-loops. In other words, an automaton is weakly acyclic if and only if there exists an ordering $q_1, q_2, \ldots, q_n$ of its states such that if $\delta(q_i, x) = q_j$ for some letter $x \in \Sigma$, then $i \le j$ (such ordering is called a {\em topological sort}). Since a topological sort can be found in polynomial time \cite{Cormen2009}, this class can be recognized in polynomial time. Weakly acyclic automata are called acyclic in \cite{Jiraskova2012} and partially ordered in \cite{Brzozowski1980}, where in particular the class of languages recognized by such automata is characterized.

Weakly acyclic automata arise naturally in synchronizing automata theory. Section \ref{sec-short-words-compl} of this paper shows several examples of existing proofs where weakly acyclic automata appear implicitly in complexity reductions. Surprisingly, most of the computational problems that are hard for general automata remain very hard in this class despite its very simple structure. Thus, investigation of weakly acyclic automata provides good lower bounds on the complexity of many problems for general automata.  An automaton is called {\em aperiodic} if for any word $w \in \Sigma^*$ and any state $q \in Q$ there exists $k$ such that $\delta(q, w^k) = \delta(q, w^{k + 1})$, where $w^k$ is a word obtained by $k$ concatenations of $w$ \cite{Trahtman2007}. Obviously, weakly acyclic automata form a proper subclass of aperiodic automata, thus all hardness results hold for the class of aperiodic automata.

The concept of synchronization is often used as an abstraction of returning control over an automaton when there is no a priori information about its current state, but the structure of the automaton is known. If the automaton is synchronizing, we can apply a synchronizing word to it, and thus it will transit to a known state. If we want to perform the same operation when the current state is known to belong to some subset of states of the automaton, we come to the definition of a synchronizing set. A set $S \subseteq Q$ of states of an automaton $A$ is called {\em synchronizing} if there exists a word $w \in \Sigma^*$ and a state $q \in Q$ such that the word $w$ maps each state $s \in S$ to the state $q$. The word $w$ is said to {\em synchronize} the set $S$. It follows from the definition that an automaton is synchronizing if and only if the set $Q$ of all its states is synchronizing. Consider the problem {\sc Sync Set} of deciding whether a given set $S$ of states of an automaton $A$ is synchronizing.

\begin{tabular}{||p{30em}}
	~{\sc Sync Set}\\
	~{\em Input}: An automaton $A$ and a subset $S$ of its states;\\
	~{\em Output}: Yes if $S$ is a synchronizing set, No otherwise.
\end{tabular}

The {\sc Sync Set} problem is PSPACE-complete \cite{Rystsov1983, Sandberg2005}, even for binary strongly connected automata \cite{Vorel2016} (an automaton is called {\em binary} if its alphabet has size two, and {\em strongly connected} if any state can be mapped to any other state by some word). In \cite{Natarajan1986} it is shown that the {\sc Sync Set} problem is solvable in polynomial time for orientable automata if the cyclic order respected by the automaton is provided in the input. This problem is also solvable in polynomial time for monotonic automata \cite{Ryzhikov2017Monotonic}. The problem of deciding whether the whole set of states of an automaton is synchronizing is also solvable in polynomial time \cite{Volkov2008}.

One of the most important questions in synchronizing automata theory is the famous \v{C}ern{\'y} conjecture stating that any $n$-state synchronizing automaton has a synchronizing word of length at most $(n - 1)^2$. The conjecture is proved for various special cases, including orientable, Eulerian, aperiodic and other automata (see \cite{Volkov2008} for references), but is still open in general. For more than 30 years, the best upper bound was $\frac{n^3 - n}{6}$, obtained in \cite{Pin1983}. Recently, a small improvement on this bound has been reported in~\cite{Szykula2017}: the new bound is still cubic in $n$ but improves the coefficient $\frac{1}{6}$ at $n^3$ by $\frac4{46875}$.

While there is a simple cubic bound on the length of a synchronizing word for the whole automaton, there exist examples of automata where the length of a shortest word synchronizing a subset of states is exponential in the number of states \cite{Vorel2016}. For orientable $n$-state automata, a tight upper bound of $(n - 1)^2$ is known \cite{Eppstein1990}, and this bound is also asymptotically tight for monotonic automata \cite{Ryzhikov2017Monotonic}. On the other hand, a trivial upper bound $2^n - n - 1$ on the length of a shortest word synchronizing a subset of states in a $n$-state automaton is known \cite{Vorel2016}. In \cite{CardosoThesis2014} Cardoso considers the length of a shortest word synchronizing a subset of states in a synchronizing automaton.

We assume that the reader is familiar with the notions of an NP-complete problem (refer to the book by Sipser~\cite{Sipser2006}), an approximation algorithm and a gap-preserving reduction (for reference, see the book by Vazirani~\cite{Vazirani2001}). 

Given an automaton $A$, the {\em rank} of a word $w$ with respect to $A$ is the number $|\{\delta(s, w) \mid s \in Q\}|$, i.e., the size of the image of $Q$ under the mapping defined in $A$ by $w$. More generally, the rank of a word $w$ with respect to a subset $S$ of states of $A$ is the number $|\{\delta(s, w) \mid s \in S\}|$. The {\em rank} of an automaton (resp. of a subset of states) is the minimum among the ranks of all words $w \in \Sigma^*$ with respect to the automaton (resp. to the subset of states).

In this paper we provide various results concerning computational complexity and approximability of the problems related to subset synchronization in weakly acyclic automata. In Section \ref{short-words} we prove some lower and upper bounds on the length of a shortest word synchronizing a weakly acyclic automaton or, more generally, a subset of its states. In Section \ref{sec-short-words-compl} we investigate the computational complexity of finding such words. In Section \ref{sec-max-sync-set} we study inapproximability of the problem of finding a subset of states of maximum size. In Section \ref{sec-rank} we give strong inapproximability results for computing the rank of a subset of states in binary weakly acyclic automata. In Section \ref{sec-subset} we show that several other problems related to recognizing a synchronizing set in a weakly acyclic automaton are hard.

A preliminary conference version of this paper was published in \cite{Ryzhikov2017WA}.

\section{Bounds on the Length of Shortest Synchronizing Words}\label{short-words}

Each synchronizing weakly acyclic automaton is a $0$-automaton (i.e., an automaton with exactly one sink state), which gives an upper bound $\frac{n(n - 1)}{2}$ on the length of a shortest synchronizing word \cite{Rystsov1997}. The same bound can be deduced from the fact that each weakly acyclic automaton is aperiodic \cite{Trahtman2007}. However, for weakly acyclic automata a more accurate result can be obtained, showing that weakly acyclic automata of rank $r$ behave in a way similar to monotonic automata of rank $r$ (see \cite{Ananichev2004}).

\begin{proposition}\label{lemma-mon-log-word}
	Let $A = (Q, \Sigma, \delta)$ be a $n$-state weakly acyclic automaton, such that there exists a word of rank~$r$ with respect to $A$. Then there exists a word of length at most $n - r$ and rank at most $r$ with respect~to~$A$.
\end{proposition}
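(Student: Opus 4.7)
The plan is to prove the proposition by induction on the number of states $n$, exploiting a topological ordering $q_1, q_2, \ldots, q_n$ guaranteed by weak acyclicity. The base case $n = 1$ and the trivial situation where the rank of $A$ already equals $n$ are handled by the empty word, so assume rank$(A) \leq r < n$, and split into two cases based on the behaviour of the minimum element $q_1$.

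The structural observation I will exploit is that $q_1$ can be mapped to $q_1$ only from $q_1$ itself, since every letter sends $q_i$ to some $q_j$ with $j \geq i$. \emph{Case A:} every letter fixes $q_1$. Then $Q \setminus \{q_1\}$ is $\Sigma$-invariant and induces a weakly acyclic sub-automaton $A'$ on $n - 1$ states; for every word $w$ the image decomposes as the disjoint union $\{\delta(q, w) : q \in Q\} = \{q_1\} \sqcup \{\delta(q, w) : q \in Q \setminus \{q_1\}\}$, so the rank of $A'$ equals rank$(A) - 1 \leq r - 1$. The induction hypothesis applied to $A'$ then yields a word of length at most $(n - 1) - (r - 1) = n - r$ whose rank on $A'$ is at most $r - 1$, and this word has rank at most $r$ on $A$. \emph{Case B:} some letter $x_1$ satisfies $\delta(q_1, x_1) \neq q_1$. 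Set $Q_1 = \{\delta(q, x_1) : q \in Q\}$; then $q_1 \notin Q_1$, and the set $R_1$ of all states reachable from $Q_1$ by arbitrary words also avoids $q_1$, giving $|R_1| \leq n - 1$. The induced sub-automaton $A_1$ on $R_1$ is weakly acyclic, and since $R_1 \subseteq Q$ its rank is at most $r$. Inductively there is a word $w$ of length at most $|R_1| - r \leq n - 1 - r$ and rank at most $r$ on $R_1$, hence on $Q_1 \subseteq R_1$. The word $x_1 w$ therefore has length at most $n - r$ and rank at most $r$ on $A$.

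The main obstacle is that the naive greedy strategy — at each step apply any letter that strictly reduces the current image — can genuinely fail: even within weakly acyclic automata one can exhibit small examples where a reachable subset admits no single letter decreasing its size, although some longer word does. The case analysis on $q_1$ sidesteps this: instead of greedily shrinking an arbitrary subset, we either keep $q_1$ and recurse on a smaller sub-automaton (Case A), or spend one carefully chosen letter to eliminate $q_1$ and then recurse on the $\Sigma$-invariant closure of its image (Case B). Several edge cases — notably $r = n$ (absorbed by the empty word), Case A implicitly forcing rank$(A) \geq 2$, and $|R_1| < r$ trivialising the recursive subproblem so that $w = \varepsilon$ already suffices — must be checked but do not affect the structure of the argument.
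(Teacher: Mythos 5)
Your proof is correct, and it reaches the conclusion by a genuinely different route than the paper. The paper's proof is a single greedy pass: it first observes that the rank of a weakly acyclic automaton equals its number of sink states (so $A$ has at most $r$ of them), then repeatedly applies a letter moving the non-sink state of smallest topological index in the current image, asserting rather tersely that $n-r$ such steps suffice to reach rank at most $r$. You instead induct on $n$ with a case split on the first state $q_1$ of the topological sort: when $q_1$ is a sink you peel it off and recurse with parameter $r-1$, which leaves $n-r$ unchanged and costs no letter; when some letter evicts $q_1$ you spend that one letter and recurse on the $\Sigma$-closed reachable set $R_1$ with the same $r$, which decreases $n-r$ by at least one and so pays for the letter. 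Unrolled, your recursion produces essentially the same word as the paper's greedy pass, but your bookkeeping makes the length bound $n-r$ fall out explicitly (something the paper leaves implicit) and dispenses with the rank-equals-number-of-sinks observation. The one point you should state explicitly is that the statement being inducted on is the variant whose hypothesis is \emph{``there exists a word of rank at most $r$''}: this is what you actually verify for $A'$ (rank at most $r-1$) and for $A_1$ (rank at most $r$), it is what closes the induction, and it implies the proposition as stated; with the hypothesis read as ``rank exactly $r$'' the recursive calls would not be justified.
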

\begin{proof} Observe that the rank of a weakly acyclic automaton is equal to the number of sink states in it. The conditions of the theorem imply that $A$ has at most $r$ sink states.
	
	 Consider the sets $S_1, \ldots, S_t$ constructed in the following way. Let $p_i$ be the state in $S_{i - 1}$ with the smallest index in the topological sort such that $p_i$ is not a sink state. Let $x_i, 1 \le i \le t$, be a letter mapping the state $p_i$ to some other state, where $S_i = \{\delta(q, x_i) \mid q \in S_{i - 1}\}, 1 \le i \le t$, and $S_0 = Q$. Since $A$ has at most $r$ sink states, the word $w = x_1 \ldots x_t$ exists for any $t \le n - r$ and has rank at most $r$ with respect to $A$. 
\end{proof}

The following simple example shows that the bound is tight. Consider an automaton $A = (Q, \Sigma, \delta)$ with states $q_1, \ldots, q_n$. Let each letter except some letter $x$ map each state to itself. For the letter $x$ define the transition function $\delta(q_i, x) = q_{i + 1}$ for $1 \le i \le n - r$ and $\delta(q_i, x) = q_i$ for $ n - r + 1 \le i \le n$. Obviously, $A$ has rank $r$ and shortest words of rank $r$ with respect to $A$ have length $n - r$.

\begin{proposition}\label{thm-ac-subs-short-word}
	Let $S$ be a synchronizing set of states of size $k$ in a weakly acyclic $n$-state automaton~$A = (Q, \Sigma, \delta)$. Then the length of a shortest word synchronizing $S$ is at most $\frac{k(2n - k - 1)}{2}$.
\end{proposition}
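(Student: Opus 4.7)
The plan is to track a simple ``index-sum'' potential along a shortest synchronizing word for $S$ and argue it must strictly decrease at every letter. Fix a topological sort $q_1, q_2, \ldots, q_n$ of $A$, set $\mathrm{ind}(q_i) = i$, and for any $T \subseteq Q$ define
\[
\Phi(T) \;=\; \sum_{q \in T} \bigl(n - \mathrm{ind}(q)\bigr).
\]
Let $w = x_1 x_2 \cdots x_L$ be a shortest word synchronizing $S$, and set $T_0 = S$ and $T_i = \delta(T_{i-1}, x_i)$ for $1 \le i \le L$. The target claim is $\Phi(T_i) < \Phi(T_{i-1})$ for every $i$. Once this is in hand, one obtains $L \le \Phi(S) - \Phi(T_L) \le \Phi(S)$, and since $S$ consists of $k$ elements with distinct indices in $\{1, \ldots, n\}$, $\Phi(S)$ is maximized by putting $S$ at the $k$ lowest positions, giving $\sum_{i=1}^{k}(n - i) = kn - k(k+1)/2 = k(2n-k-1)/2$, exactly the bound to prove.

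The strict-decrease claim I would prove through the chain
\[
\Phi(T_i) \;\le\; \sum_{q \in T_{i-1}} \bigl(n - \mathrm{ind}(\delta(q, x_i))\bigr) \;\le\; \Phi(T_{i-1}),
\]
where the first inequality comes from collapsing the multiset $\{\delta(q, x_i) : q \in T_{i-1}\}$ to the set $T_i$ (nonnegative summands can only be lost in this passage), and the second uses the topological property $\mathrm{ind}(\delta(q, x_i)) \ge \mathrm{ind}(q)$ guaranteed by weak acyclicity. If both inequalities are equalities, the second forces $\mathrm{ind}(\delta(q, x_i)) = \mathrm{ind}(q)$, and hence $\delta(q, x_i) = q$, for every $q \in T_{i-1}$; therefore $T_i = T_{i-1}$, and deleting $x_i$ from $w$ yields a shorter word still synchronizing $S$, contradicting the minimality of $L$. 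Consequently, $\Phi$ drops by at least one per step.

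The main technical point to handle carefully is this equality analysis: one must verify that vanishing of the nonnegative sum $\sum_{q \in T_{i-1}} \bigl(\mathrm{ind}(\delta(q, x_i)) - \mathrm{ind}(q)\bigr)$ forces \emph{pointwise} fixation on $T_{i-1}$, not merely setwise invariance. This step leans on the fact that distinct states have distinct indices in the topological sort, so index preservation immediately upgrades to fixation. Everything else reduces to a routine maximization of $\Phi$ over $k$-subsets of $\{q_1, \ldots, q_n\}$, from which the bound $k(2n-k-1)/2$ follows directly.
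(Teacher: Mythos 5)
Your proof is correct and follows essentially the same route as the paper's: both track a potential built from the positions of the current image in a topological sort and observe that a letter of a shortest synchronizing word must strictly advance at least one state, giving at most $\sum_{i=1}^{k}(n-i)=\frac{k(2n-k-1)}{2}$ steps. The only cosmetic difference is that you use the complementary, decreasing potential $\Phi$ on the image \emph{set} (with an explicit multiset-to-set collapsing inequality and equality analysis), whereas the paper tracks the increasing sum of indices of the images of the states of $S$; your write-up is, if anything, slightly more careful on the degenerate-letter case.
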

\begin{proof} Consider a topological sort $q_1, \ldots, q_n$ of the set $Q$. Let $q_s$ be a state such that all states in $S$ can be mapped to it by a shortest word $w = x_1 \ldots x_t$. We can assume that the images of all words $x_1 \ldots x_j$, $j \le t$, are pairwise distinct, otherwise some letter in this word can be removed. Then a letter $x_j$ maps at least one state of the set $\{\delta(q, x_1 \ldots x_{j - 1}) \mid q \in S \}$ to some other state. Thus the maximum total number of letters in $w$ sending all states in $S$ to $q_s$ is at most $(n - k) + (n - k + 1) + \ldots + (n - 1) = \frac{k(2n - k - 1)}{2}$, since application of each letter of $w$ increases the sum of the indices of reached states by at least one.  \end{proof}

Consider a binary automaton $A = (Q, \{0, 1\}, \delta)$ with $n$ states $q_1, \ldots, q_{k - 1}$, $s_1, \ldots, s_{\ell}$, $t$, where $\ell = n - k$. Define $\delta(q_i, 0) = q_i$, $\delta(q_i, 1) = q_{i + 1}$ for $1 \le i \le k - 2$,  $\delta(q_{k - 1}, 1) = s_1$. Define also $\delta(s_i, 0) = s_{i + 1}$ for $1 \le i \le \ell - 1$, $\delta(s_i, 1) = t$ for $1 \le i \le \ell - 1$. Define both transitions for $s_\ell$ and $t$ as self-loops. Set $S = \{q_1, \ldots, q_{k - 1}, s_\ell\}$. The shortest word synchronizing $S$ is $(10^{l - 1})^{k-1}$ of length $(k-1)(n - k)$. The automaton in this example is binary weakly acyclic, and even has rank $2$. Figure \ref{fig-subset} gives the idea of the described construction.

\setlength{\unitlength}{2pt}
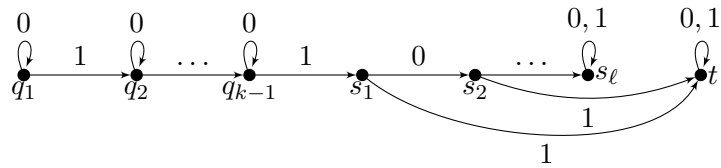
\begin{figure}[hbt]
	\begin{center}
		\begin{tikzpicture}[->,>=latex',
		vertex/.style={circle, draw=black, fill, minimum width=1.5mm, inner sep=0pt, outer sep=0pt},
		every label/.style={inner sep=0pt, minimum width=0pt, label distance=0.1mm},
		yscale=1,
		xscale=1.5
		]
		\graph[nodes=vertex, empty nodes, no placement] {

			{
				q1[x=0,y=0,label=below:$q_1$] -> [edge label=1]
				q2[x=1,y=0,label=below:$q_2$] -> [edge label=$\ldots$]
				qk[x=2,y=0,label=below:$q_{k-1}$] -> [edge label=1]
				s1[x=3,y=0,label=below:$s_1$] -> [edge label=0]
				s2[x=4,y=0,label=below:$s_2$] -> [edge label=$\ldots$]
				sl[x=5,y=0,label=right:$s_\ell$]
			};
			{
				t[x=6,y=0,label=right:$t$] -> [edge label={$0, 1$}, loop above] t
			};
			{
				q1 -> [edge label=0, loop above] q1
			};
			{
				q2 -> [edge label=0, loop above] q2
			};
			{
				qk -> [edge label=0, loop above] qk
			};
			{
				s1 -> [edge label=1, swap, bend right, out=310, in=250] t
			};
			{
				s2 -> [edge label=1, swap, bend right] t
			};
			{
				sl -> [edge label={$0, 1$}, loop above] s3
			};
		};
		\end{tikzpicture}
		\caption{The automaton providing the lower bound for subset synchronization} \label{fig-subset}
	\end{center}
\end{figure}

As was noted by an anonymous reviewer, for alphabet of size $n-2$, a better lower bound of $\frac{(k - 1)(2n - k - 2)}{2}$ can be shown as follows. Let $Q = \{-1, 0, 1, \dots, n-2\}$, $\Sigma = \{a_1, \dots, a_{n - 2}\}$,

$$\delta(k, a_i) = \left\{
	\begin{array}{ll}
	k &\mbox{if } k > i,\\
	k - 1 &\mbox{if } k = i,\\
	-1  &\mbox{if } 0 < k < i,\\
	k &\mbox{if } k\in\{-1, 0\}
	\end{array}
\right.$$

If $k < n$ and $S=\{0, n - 2, n - 3, \dots, n - k\}$, then it is easy to see that the shortest word synchronizing $S$ has length $(n - k) + (n - k + 1) + \dots + (n - 2) = \frac{(k - 1)(2n - k - 2)}{2}$. For each $n$ and $k$, this is less than the upper bound of Proposition \ref{thm-ac-subs-short-word} by $n - 1$ only.

\section{Complexity of Finding Shortest Synchronizing Words}\label{sec-short-words-compl}

Now we proceed to the computational complexity of some problems, related to finding a shortest synchronizing word for an automaton. Consider first the following problem.

\begin{tabular}{||p{30em}}
	~{\sc Shortest Sync Word}\\
	~{\em Input}: A synchronizing automaton $A$;\\
	~{\em Output}: The length of a shortest synchronizing word for $A$.
\end{tabular}

First, we note that the automaton showing inapproximability of {\sc Shortest Sync Word} in the construction of Berlinkov \cite{Berlinkov2014} is weakly acyclic.

\begin{proposition}
	For any $\gamma > 0$, the {\sc Shortest Sync Word} problem for $n$-state weakly acyclic automata with alphabet of size at most $n^{1 + \gamma}$ cannot be approximated in polynomial time within a factor of $d \log n$ for any $d < c_{sc}$ unless P = NP, where $c_{sc}$ is some constant. 
\end{proposition}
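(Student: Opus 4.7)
The plan is to show that the inapproximability construction from \cite{Berlinkov2014} already lives entirely within the weakly acyclic class, so no new reduction is needed --- one merely has to audit the existing construction and check the alphabet-size bound. First I would recall the high-level structure of Berlinkov's gap-preserving reduction from \textsc{Set Cover} to \textsc{Shortest Sync Word}: given a \textsc{Set Cover} instance with ground set $U$ and family $\mathcal{F}$, he assembles an automaton whose states include a sink together with gadgets attached to each element of $U$, and whose alphabet uses one letter per set $F\in\mathcal{F}$ plus a bounded number of service letters. An immediate count of parameters shows that the number of letters is polynomial in the number of states $n$, hence bounded by $n^{1+\gamma}$ for any prescribed $\gamma>0$.

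The key step is verifying weak acyclicity. I would fix a candidate topological order in which the sink appears last and each gadget state is placed according to its natural distance-to-sink, and then go through the transition function letter by letter to confirm that every non-trivial transition strictly advances states in this order. Once this is in place, every simple cycle is forced to be a self-loop, and the automata produced by the reduction lie in the weakly acyclic class. Inapproximability then follows for free: Berlinkov's reduction preserves the multiplicative gap of his source problem, and the $(c_{sc}-o(1))\ln|U|$ hardness of approximation for \textsc{Set Cover} under $\mathrm{P}\ne\mathrm{NP}$ (Feige; Dinur--Steurer) transfers through, giving the claimed $d\log n$ bound for any $d<c_{sc}$.

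The main obstacle is entirely bookkeeping rather than conceptual: a careful case analysis over all letters of Berlinkov's automaton is required to rule out any back-edge in the chosen order, since a single overlooked transition would invalidate weak acyclicity. A secondary minor point is confirming that the alphabet size stays within $n^{1+\gamma}$ after any padding needed to align the parameters of \textsc{Set Cover} with the statement of the proposition; this is routine but should be stated explicitly rather than left implicit in the appeal to \cite{Berlinkov2014}.
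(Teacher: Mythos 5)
Your proposal matches the paper's approach exactly: the paper gives no separate proof, but simply observes that the automaton in Berlinkov's construction \cite{Berlinkov2014} is already weakly acyclic, so the $d\log n$ inapproximability (with $c_{sc}$ inherited from the {\sc Set Cover} hardness) transfers directly, with the alphabet-size bound checked by a parameter count. Your version is a more explicit write-up of the same audit-the-existing-reduction argument.
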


In Berlinkov's reduction to the binary case, the automaton is no longer weakly acyclic. However, the binary automaton showing NP-hardness of {\sc Shortest Sync Word} in Eppstein's construction \cite{Eppstein1990} is weakly acyclic.

\begin{proposition}
	{\sc Shortest Sync Word} is NP-hard for binary weakly acyclic automata.
\end{proposition}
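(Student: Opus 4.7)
The plan is to appeal directly to Eppstein's existing NP-hardness reduction from \cite{Eppstein1990} for the binary case and simply verify that the automaton it outputs is already weakly acyclic. Since the NP-hardness itself is not in question (it is the content of Eppstein's theorem), the entire burden of the proof reduces to a structural check on the image of the reduction: exhibit a topological sort of the constructed automaton's state set under which every transition is either a self-loop or goes strictly forward.

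Concretely, I would first recall the construction used by Eppstein to prove NP-hardness of \textsc{Shortest Sync Word} for binary automata. In that construction, the states can be partitioned into a few groups (corresponding to the variables and clauses of the input 3-SAT instance, together with a small gadget for the ``reset'' behaviour), and these groups are chained together in a natural linear order coming from the intended execution of a synchronizing word: each application of a letter either keeps a state in place inside its group or passes it to a state belonging to a later group (or to the single ``sink-like'' absorbing state at the end). I would then enumerate the states exactly in this order and, letter by letter, check that for every transition $\delta(q_i,x)=q_j$ one has $j \ge i$.

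Once this monotonicity of transitions along the ordering is established, the characterization of weakly acyclic automata recalled in the introduction --- the existence of a topological sort such that all transitions are non-decreasing --- is satisfied, so every simple cycle in the automaton is a self-loop. Combined with Eppstein's correctness and completeness arguments for the reduction, which are preserved verbatim, this yields NP-hardness of \textsc{Shortest Sync Word} restricted to binary weakly acyclic automata.

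The main (and only real) obstacle is notational: Eppstein's construction is stated in a form tuned to monotonic automata, so I would need to rewrite his transitions in our notation and cleanly exhibit the linear order, rather than relying implicitly on his framework. There is no genuine mathematical difficulty once the ordering is written down, because the forward-only nature of the transitions is already built into the very design of Eppstein's gadget.
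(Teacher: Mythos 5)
Your proposal is exactly the paper's approach: the paper offers no written proof at all for this proposition and simply observes that the binary automaton produced by Eppstein's NP-hardness reduction is already weakly acyclic, which is precisely the structural check you describe. Your elaboration of how to exhibit the topological sort is a reasonable fleshing-out of that one-line observation, so the two arguments coincide.
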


Consider now the following more general problem.

\begin{tabular}{||p{30em}}
	~{\sc Shortest Set Sync Word} \\
	~{\em Input}: An automaton $A$ and a synchronizing subset $S$ of its states; \\
	~{\em Output}: The length of a shortest word synchronizing $S$.
\end{tabular}

It follows from Proposition \ref{thm-ac-subs-short-word} that the decision version of this problem (asking whether there exists a word of length at most $k$ synchronizing $S$) is in NP for weakly acyclic automata, so it is reasonable to investigate its approximability.

\begin{theorem}
	The {\sc Shortest Set Sync Word} problem for $n$-state binary weakly acyclic automata cannot be approximated in polynomial time within a factor of $O(n^{\frac{1}{2} - \epsilon})$ for any $\epsilon > 0$ unless P = NP.
\end{theorem}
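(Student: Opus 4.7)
The plan is a polynomial-time gap-preserving reduction from 3-SAT (after suitable padding so that the number of variables $m$ and the number of clauses $c$ are of the same order) to {\sc Shortest Set Sync Word} on binary weakly acyclic automata, producing a multiplicative gap of $\Omega(\sqrt{n})$ between YES and NO instances, which is stronger than and hence implies the claimed $n^{1/2-\epsilon}$ bound for every $\epsilon > 0$.

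Given a 3-SAT instance $\phi$, I would construct an automaton $A_\phi$ on $n = \Theta(mc) = \Theta(m^2)$ states following the template of the lower-bound example of Figure~\ref{fig-subset}. For each variable $x_i$ introduce a small ``choice'' gadget whose first applied letter commits to a truth value of $x_i$ and which then merges into a long ``slow chain'' of $\Theta(c)$ states playing the role of the $s_1,\dots,s_\ell$ sequence. For each clause $C_j$ attach a ``witness'' state whose incoming transitions are labelled by literals of $C_j$ so that the witness can reach the correct sink only when the committed assignment satisfies $C_j$. The subset $S$ is taken to consist of the heads of all variable gadgets together with all clause witness states, plus the final state $s_\ell$ to fix the sink, mirroring the set $\{q_1,\dots,q_{k-1},s_\ell\}$ of the example.

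In the YES direction, a satisfying assignment $\alpha$ gives rise to a word $w_\alpha\cdot 0^{\Theta(c)}$ of length $L_{\text{yes}} = \Theta(m+c) = \Theta(\sqrt n)$: the prefix $w_\alpha$ commits each variable gadget to the value dictated by $\alpha$ in a single letter per variable, and the trailing $0$'s drive every state along the long chain to the common sink $s_\ell$. In the NO direction, as in the $(10^{\ell-1})^{k-1}$ analysis of the example, the ``fast'' letter from inside the long chain leads to the wrong sink $t$, so any synchronising word must re-enter the long chains at least $\Theta(m)$ times, and each such re-entry costs $\Theta(c)$ letters. Since no single commitment to an assignment can satisfy all witnesses simultaneously, this yields $L_{\text{no}} = \Omega(mc) = \Omega(n)$, and therefore a ratio $L_{\text{no}}/L_{\text{yes}} = \Omega(\sqrt n)$ that rules out polynomial-time $O(n^{1/2-\epsilon})$-approximations unless $\mathrm{P}=\mathrm{NP}$.

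The main obstacle is designing the clause gadget within the two-letter, cycle-free regime: with no additional alphabet symbols one cannot name clauses and re-verify them, so the gadget has to exploit the subset-synchronisation mechanism in the same essential way as Figure~\ref{fig-subset}, where the ``fast'' letter misdirects states to a dead-end sink and so every shortcut is penalised. The nontrivial combinatorial core will be verifying that, under unsatisfiability, no clever interleaving of the two letters can merge variable gadgets ahead of time to produce an unintended short synchronising word. The remaining arithmetic and the padding that balances $m$ and $c$ so as to identify $n$ with $m^2$ are routine.
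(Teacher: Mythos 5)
There is a genuine gap: what you have written is a plan for a reduction, not a reduction. The paper takes an entirely different and much shorter route: it starts from the known $O(p^{1-\epsilon})$-inapproximability of {\sc Shortest Sync Word} for $p$-state binary automata \cite{Gawrychowski2015} and makes that hard instance weakly acyclic by ``unrolling'' it into $p+1$ layers $q_i^{(j)}$, each letter advancing the layer index, with the last layer frozen into self-loops; taking $S'$ to be the first layer and using the fact that the hard instances have shortest synchronizing words of length at most $p$, the optimum is preserved exactly while the state count becomes $n=\Theta(p^2)$, which converts the $p^{1-\epsilon}$ gap into $n^{1/2-\epsilon}$. Your direct reduction from 3-SAT is not a priori doomed (a constructed gap from an NP-hard decision problem is legitimate, as in TSP-style arguments), but the entire difficulty of the theorem is concentrated in the clause gadget that you explicitly leave unbuilt, conceding that ``the nontrivial combinatorial core will be verifying'' that no interleaving of the two letters shortcuts the intended behaviour. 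Until that gadget is exhibited and that verification done, there is no proof.

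Two specific points in your sketch would fail as stated. First, the NO-case mechanism ``any synchronising word must re-enter the long chains at least $\Theta(m)$ times'' is incoherent in a weakly acyclic automaton: no state can re-enter anything, since every transition moves forward in the topological order. The lower bound of Figure~\ref{fig-subset} is obtained not by re-entry but by \emph{serializing} $k-1$ distinct states of $S$ through a single chain, because the letter that advances a waiting state destroys any state currently inside the chain; to make your reduction work you would have to show that unsatisfiability forces this serialization while satisfiability permits parallel traversal, and that is precisely the missing construction. Second, {\sc Shortest Set Sync Word} takes as input a subset that is \emph{promised} to be synchronizing, so your instance must have $S$ synchronizing even when $\phi$ is unsatisfiable --- yet your clause witnesses ``can reach the correct sink only when the committed assignment satisfies $C_j$,'' which as described would make $S$ non-synchronizing (hence an invalid instance) rather than slowly synchronizing in the NO case. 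You need to argue both that a long synchronizing word always exists and that no unintended short one does; neither is addressed.
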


\begin{proof} To prove this theorem, we construct a gap-preserving reduction from the  {\sc Shortest Sync Word} problem in $p$-state binary automata, which cannot be approximated in polynomial time within a factor of $O(p^{1 - \epsilon})$ for any $\epsilon > 0$ unless P = NP \cite{Gawrychowski2015}. Let a binary automaton $A = (Q, \{0, 1\}, \delta)$ be the input of {\sc Shortest Sync Word}. Let $Q = \{q_1, \ldots, q_p\}$. Construct a binary automaton $A' = (Q', \{0, 1\}, \delta')$ with the set of states $Q' =  \{q_i^{(j)} \mid 1 \le i \le p, 1 \le j \le p + 1\}$. Define $\delta'(q_i^{(j)}, x) = q_k^{(j + 1)}$ for $1 \le i \le p$, $1 \le j \le p$, $x \in \{0, 1\}$, where $k$ is such that $q_k = \delta(q_i, x)$. Define $\delta'(q_i^{(p + 1)}, x) = q_i^{(p + 1)}$ for $1 \le i \le p$ and $x \in \{0, 1\}$. Take $S' = \{q_i^{(1)} \mid 1 \le i \le p \}$.
	
Observe that any word synchronizing $S'$ in $A'$ is a synchronizing word for $A$ because of the definition of $\delta'$. In the other direction, we note that a shortest synchronizing word for a $p$-state automaton in the construction of Gawrychowski and Straszak \cite{Gawrychowski2015} has length at most $p$. Hence, a shortest synchronizing word for $A$ also synchronizes $S'$ in $A'$. Thus, the length of a shortest synchronizing word for $A$ is equal to the length of a shortest word synchronizing $S'$ in $A'$, and we get a gap-preserving reduction with gap $O(p^{1-\epsilon}) = O(n^{\frac{1}{2} - \epsilon})$, as $A'$ has $O(p^2)$ states. Finally, it is easy to see that $A'$ is binary weakly acyclic.  \end{proof}

\section{Finding a Synchronizing Set of Maximum Size} \label{sec-max-sync-set}

One possible approach to measure and reduce initial state uncertainty in an automaton is to find a subset of states of maximum size where the uncertainty can be resolved, i.e., to find a synchronizing set of maximum size. This is captured by the following problem.

\begin{tabular}{||p{30em}}
	~{\sc Max Sync Set} \\
	~{\em Input}: An automaton $A$;\\
	~{\em Output}: A synchronizing set of states of maximum size in $A$.
\end{tabular}

T\"{u}rker and Yenig\"{u}n \cite{Turker2015} study a variation of this problem, which is to find a set of states of maximum size that can be mapped by some word to a subset of a given set of states in a given monotonic automaton. They reduce the {\sc N-Queens Puzzle} problem \cite{Bell2009} to this problem to prove its NP-hardness. However, their proof is unclear, since the input has size $O(\log N)$, and the output size is polynomial in $N$. Also, the {\sc N-Queens Puzzle} problem is solvable in polynomial time \cite{Bell2009}.

First we investigate the PSPACE-completeness of the decision version of the  {\sc Max Sync Set} problem, which we shall denote as {\sc Max Sync Set-D}. Its formulation is the following: given an automaton $A$ and a number $c$, decide whether there is a synchronizing set of states of cardinality at least $c$ in $A$.

\begin{theorem}\label{thm-pspace-general}
	The {\sc Max Sync Set-D} problem is PSPACE-complete for binary automata.
\end{theorem}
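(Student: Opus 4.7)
The plan is to prove PSPACE-membership and PSPACE-hardness separately. For membership, given $(A, c)$ I would nondeterministically guess a subset $S \subseteq Q$ with $|S| \ge c$ (using $O(|Q|)$ bits) and then run the known PSPACE algorithm for \textsc{Sync Set} on $(A, S)$, based on the standard subset construction. Since NPSPACE $=$ PSPACE by Savitch's theorem, \textsc{Max Sync Set-D} lies in PSPACE.

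For hardness, I would reduce from \textsc{Sync Set} on binary automata, which is PSPACE-complete by Rystsov and Sandberg (cited in the introduction). Given an instance $(A, S)$ with $A$ binary and $|S| = k$, the aim is to construct a binary automaton $A'$ together with a threshold $c$ such that $A'$ admits a synchronizing set of size at least $c$ if and only if $S$ is synchronizable in $A$. The idea I would pursue is a duplicate-witness gadget: for each state $s \in S$ adjoin a fresh state $s^\sharp$ so that one of the two letters self-loops at $s^\sharp$ and the other sends $s^\sharp$ to $s$. Let $S^\sharp = \{s^\sharp : s \in S\}$. The forward direction is transparent: applying the "jump" letter once carries $S^\sharp$ onto $S$, and a synchronizing word for $S$ in $A$ then synchronizes $S^\sharp$ in $A'$. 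For the reverse direction, any word synchronizing $S^\sharp$ must at some point use the jump letter (otherwise the duplicates remain distinct), after which the image equals $S$, so $S$ must indeed be synchronizable in $A$.

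The main obstacle is to choose $c$ and possibly additional padding states so that the existence of a synchronizing set of size at least $c$ in $A'$ is equivalent to $S^\sharp$ being synchronizable, ruling out spurious synchronizing sets of size $\ge c$ that avoid $S^\sharp$ or mix $S^\sharp$ with $Q$-states in unintended ways. I would attempt this by taking a PSPACE-hard binary instance with controlled structure (such as Vorel's strongly connected construction cited in the introduction) and padding $A'$ with further witness or sink states, calibrated so that every large synchronizable subset is forced to contain $S^\sharp$. The technical crux is ensuring that the binary-alphabet restriction is preserved throughout the gadget while still enforcing this exclusivity; if direct padding proves insufficient, I would fall back on a more careful two-layer construction that makes the jump letter the only way for any sufficiently large set to collapse.
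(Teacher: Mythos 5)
Your membership argument is fine (the paper instead enumerates all subsets and tests each with the PSPACE \textsc{Sync Set} algorithm, but guessing plus Savitch is equivalent). Your hardness reduction has the right skeleton and is essentially the one the paper uses --- reduce from \textsc{Sync Set} on binary automata by attaching fresh witness states to each $s\in S$ and asking for a large synchronizing set --- but you stop exactly at the step that carries the whole proof. You explicitly leave open how to choose $c$ and how to rule out ``spurious'' large synchronizing sets that avoid $S^\sharp$ or mix it with states of $Q$; with only one witness per state of $S$ and $c=|S|$, such spurious sets genuinely can exist (e.g.\ a synchronizing subset of $Q$ itself of size $\ge |S|$), so the reduction as written is not correct, and the ``padding'' you gesture at is not specified.

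The paper closes this gap by amplification: for each $s\in S$ it attaches $n+1$ fresh states (where $n=|Q|$), \emph{all} of whose transitions, under both letters, go to $s$; it sets $S'$ to be the union of all new states and $c=(n+1)|S|$. The point of using $n+1$ identical copies is a maximality argument: if a maximum synchronizing set $S_1$ of $A'$ omits one copy attached to $s$, it must omit all $n+1$ of them (the copies are interchangeable, so any one omitted copy could be added for free), whence $|S_1|\le n+(n+1)|S|-(n+1)<c$. Therefore every synchronizing set of size at least $c$ contains all of $S'$, and $S'$ is synchronizing in $A'$ iff $S$ is synchronizing in $A$ (a word $w$ for $S$ lifts to $xw$ for $S'$ with $x$ an arbitrary letter). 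Note also that routing all transitions of the witnesses into $s$, rather than self-looping on one letter, keeps the correspondence between words clean. If you want to salvage your version, you need to supply this (or an equivalent) counting device; without it the equivalence ``$A'$ has a synchronizing set of size $\ge c$ iff $S$ is synchronizing'' does not hold.
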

\begin{proof} The {\sc Sync Set} problem is in PSPACE \cite{Sandberg2005}. Thus, the {\sc Max Sync Set-D} problem is also in PSPACE, as we can sequentially check whether each subset of states is synchronizing and compare the size of a maximum synchronizing set to $c$.
	
	To prove that the {\sc Max Sync Set-D} problem is PSPACE-hard for binary automata, we shall reduce a PSPACE-complete {\sc Sync Set} problem for binary automata to it \cite{Vorel2014}. Let an automaton $A$ and a subset $S$ of its states be an input to {\sc Sync Set}. Let $n$ be the number of states of $A$. Construct a new automaton $A'$ by initially taking a copy of $A$. For each state $s \in S$, add $n + 1$ {\em new} states to $A'$ and define all the transitions from these new states to map to $s$, regardless of the input letter. Define the set $S'$ to be a union of all new states and take $c = |S'| = (n + 1)|S|$.
	
	Let $S_1$ be a maximum synchronizing set in $A$ not containing at least one new state $q$. As $S_1$ is maximum, it does not contain other $n$ new states that can be mapped to the same state as $q$. Thus, the size of $S_1$ is at most $n + (n + 1)|S| - (n + 1) < (n + 1)|S| = c$. Hence, each synchronizing set of size at least $c$ in $A'$ contains $S'$. The set $S$ is synchronizing in $A$ if and only if $S'$ is synchronizing in $A'$, as each word $w$ synchronizing $S$ in $A$ corresponds to a word $xw$ synchronizing $S'$ in $A'$, where $x$ is an arbitrary letter. Thus, $A'$ has a synchronizing set of size at least $c$ if and only if $S$ is synchronizing in $A$. \end{proof}

Now we proceed to inapproximability results for the {\sc Max Sync Set} problem in several classes of automata. We shall need some results from graph theory. An {\em independent set} $I$ in a graph $G$ is a set of its vertices such that no two vertices in $I$ share an edge. The size of a maximum independent set in $G$ is denoted $\alpha(G)$. The {\sc Independent Set} problem is defined as follows.

\begin{tabular}{||p{30em}}
	~{\sc Independent Set} \\
	~{\em Input}: A graph $G$;\\
	~{\em Output}: An independent set of maximum size in $G$.
\end{tabular}

Zuckerman \cite{Zuckerman2006} has proved that, unless P = NP, there is no polynomial \mbox{$p^{1 - \varepsilon}$-approximation} algorithm for the {\sc Independent Set} problem for any $\varepsilon > 0$, where $p$ is the number of vertices in $G$.

\begin{theorem}\label{thm-inapprox-alph}
	The problem {\sc Max Sync Set} for weakly acyclic $n$-state automata over an alphabet of cardinality $O(n)$ cannot be approximated in polynomial time within a factor of $O(n^{1 - \varepsilon})$ for any~$\varepsilon > 0$ unless P = NP.
\end{theorem}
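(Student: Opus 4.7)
The plan is a gap-preserving reduction from Independent Set, invoking the standard promise form of Zuckerman's theorem: for every $\varepsilon > 0$, it is NP-hard to distinguish $p$-vertex graphs with $\alpha(G)\ge p^{1-\varepsilon}$ from those with $\alpha(G)\le p^{\varepsilon}$. Given such a graph $G=(V,E)$ with $V=\{v_1,\ldots,v_p\}$, I construct an automaton $A$ whose state set is $V \cup \{s_0\} \cup \{s_1^{(1)},\ldots,s_1^{(p)}\}$ (so $n=2p+1$) and whose alphabet is $\{y_1,\ldots,y_p\}$ (of size $\Theta(n)$). The states $s_0$ and each $s_1^{(j)}$ are sinks; for each letter $y_i$, I set $\delta(v_i,y_i)=s_0$, $\delta(v_j,y_i)=s_1^{(j)}$ whenever $\{v_i,v_j\}\in E$, and $\delta(v_j,y_i)=v_j$ otherwise. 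This automaton is weakly acyclic under the topological order $v_1,\ldots,v_p,s_0,s_1^{(1)},\ldots,s_1^{(p)}$.

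The core identity to prove is that the maximum synchronizing set in $A$ has size exactly $\alpha(G)+1$. The lower bound is immediate: if $I\subseteq V$ is independent, then applying $y_i$ for every $v_i\in I$ (in any order) sends each $v_i\in I$ to $s_0$ while fixing the other members of $I$ (which are non-neighbors of $v_i$), so $I\cup\{s_0\}$ is synchronizing. For the upper bound, at most one sink may appear in a synchronizing set, and including any $s_1^{(j)}$ is unhelpful because only $v_j$ has $s_1^{(j)}$ in its forward orbit; hence a maximum synchronizing set has the form $I\cup\{s_0\}$ with $I\subseteq V$. I then argue that $I$ is independent: the forward orbit of $v_i$ is exactly $\{v_i,s_0,s_1^{(i)}\}$, so for adjacent $v_i,v_j\in I$ the common image of any synchronizing word must be $s_0$; this forces the first letter of the word moving $v_i$ to be $y_i$ (otherwise $v_i$ falls into $s_1^{(i)}$ forever) and similarly the first letter moving $v_j$ to be $y_j$, but since $\{v_i,v_j\}\in E$ the letter $y_i$ already moves $v_j$ and $y_j$ already moves $v_i$, yielding the contradictory requirement that $y_i$ precede $y_j$ and vice versa.

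With the size equal to $\alpha(G)+1$ and $n=\Theta(p)$, the promise gap transfers intact: the YES case yields a maximum synchronizing set of size at least $p^{1-\varepsilon}+1$ and the NO case one of size at most $p^{\varepsilon}+1$, so the instance ratio is $\Omega(n^{1-2\varepsilon})$, and choosing $\varepsilon$ small enough gives the claimed $O(n^{1-\varepsilon'})$-inapproximability for every $\varepsilon'>0$. The main obstacle is the independence argument above: a single shared bad sink would not suffice, because then a common neighbor $v_k$ of adjacent $v_i,v_j$ could merge them via $y_k$ and the first-moving-letter argument would break. Separating the bad sinks by vertex is what drives the alphabet and the state count up to $\Theta(p)$, but keeps them within the $O(n)$ bound required by the statement.
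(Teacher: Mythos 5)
Your construction is, up to renaming ($v_i,s_0,s_1^{(j)},y_i$ for the paper's $s_i,f,t_j,\tilde v_i$), identical to the paper's, and your argument that the maximum synchronizing set has size exactly $\alpha(G)+1$ — only $s_0$ can serve as the meeting point for three or more states, and adjacent vertices cannot both reach it because each forces the other into its private sink before its own letter arrives — is the same reasoning the paper compresses into its lemma. The proposal is correct and takes essentially the same route, merely spelling out the independence step and the promise-gap bookkeeping in more detail.
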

\begin{proof} We shall prove this theorem by constructing a gap-preserving reduction from the {\sc Independent Set} problem. Given a graph $G = (V, E)$, $V = \{v_1, v_2, \ldots, v_p\}$, we construct an automaton $A = (Q, \Sigma, \delta)$ as follows. For each $v_i \in V$, we construct two states $s_i, t_i$ in $Q$. We also add a state $f$ to $Q$. Thus, $|Q| = 2p + 1$. The alphabet $\Sigma$ consists of letters $\tilde{v}_1, \ldots, \tilde{v}_p$ corresponding to the vertices of $G$.
	
	The transition function $\delta$ is defined in the following way. For each $1 \le i \le p$, the state $s_i$ is mapped to $f$ by the letter $\tilde{v}_i$. For each $v_iv_j \in E$ the state $s_i$ is mapped to $t_i$ by the letter $\tilde{v}_j$, and the state $s_j$ is mapped to $t_j$ by the letter $\tilde{v}_i$. All yet undefined transitions map a state to itself.
	
	\begin{lemma} Let $I$ be a maximum independent set in $G$. Then the set $S = \{s_i \mid v_i \in I\} \cup \{f\}$ is a synchronizing set of maximum cardinality (of size $\alpha(G) + 1$) in the automaton $A = (Q, \Sigma, \delta)$. 
	\end{lemma}
	\begin{proof} Let $w$ be a word obtained by concatenating the letters corresponding to $I$ in arbitrary order. Then $w$ synchronizes the set $S = \{s_i \mid v_i \in I \} \cup \{f\}$ of states of cardinality $|I| + 1$. Thus, $A$ has a synchronizing set of size at least $\alpha(G) + 1$.
		
		In other direction, let $w$ be a word synchronizing a set of states $S'$ of maximum size in $A$. We can assume that after reading $w$ all the states in $S'$ are mapped to $f$, as all the sets of states that are mapped to any other state have cardinality at most two. Then by construction there are no edges in $G$ between any pair of vertices in $I' = \{v_i \mid s_i \in S'\}$, so $I'$ is an independent set of size $|S'| - 1$ in $G$. Thus the maximum size of a synchronizing set in $A$ is equal to $\alpha(G) + 1$. \end{proof}
	
	Thus we have a gap-preserving reduction from the {\sc Independent Set} problem to the {\sc Max Sync Set} problem with a gap $\Theta(p^{1 - \varepsilon})$ for any $\varepsilon > 0$. It is easy to see that $n = \Theta(p)$ and $A$ is weakly acyclic, which concludes the proof of the theorem. \end{proof}

Next we move to a slightly weaker inapproximability result for binary automata.

\begin{theorem}\label{thm-inapprox-gen}
	The problem {\sc Max Sync Set} for binary $n$-state automata cannot be approximated in polynomial time within a factor of $O(n^{\frac{1}{2} - \varepsilon})$ for any $\varepsilon > 0$ unless P = NP.
\end{theorem}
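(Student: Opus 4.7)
The plan is a gap-preserving reduction from the alphabet-of-size-$p$ case handled in the proof of Theorem~\ref{thm-inapprox-alph}, turning the $p$-letter, $\Theta(p)$-state automaton $A$ into a binary automaton $A'$ with $n=\Theta(p^2)$ states; the $\Theta(p^{1-\varepsilon})$ gap in that proof then becomes $\Theta(n^{(1-\varepsilon)/2})$, which absorbs into $O(n^{1/2-\varepsilon'})$ for any $\varepsilon'<\varepsilon/2$.

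Concretely, I would assign to each letter $\tilde v_i$ of $A$ a distinct binary word $w_i\in\{0,1\}^{O(p)}$, for instance the prefix-free code $w_i=1^i 0$, and for every state $q$ of $A$ attach a "decoding gadget" of $O(p)$ fresh auxiliary states such that, reading $w_i$ bit by bit starting from the entry copy of $q$ in $A'$, the gadget arrives exactly at the entry copy of $\delta(q,\tilde v_i)$. Partial or invalid binary reads should be routed into gadget-specific dead-end states that are pairwise incompatible across different gadgets. The total state count is then $\Theta(p^2)$ and the alphabet has size two, as required.

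Correctness splits into two claims. The easy direction is routine: a word $\tilde v_{i_1}\cdots\tilde v_{i_k}$ that synchronizes $\{s_i:v_i\in I\}\cup\{f\}$ in $A$ lifts to the binary word $w_{i_1}\cdots w_{i_k}$ synchronizing the corresponding entry copies in $A'$, giving a synchronizing set of size $\alpha(G)+1$. The hard direction is the matching upper bound on the maximum synchronizing set in $A'$: one must show that it does not exceed $\alpha(G)$ by more than a low-order factor. I would prove this by a projection argument, showing that any synchronizing set $S'$ in $A'$, after reading enough bits to clear every decoding gadget, collapses to a synchronizing set in $A$, which via the analysis of Theorem~\ref{thm-inapprox-alph} forces the corresponding vertices of $G$ to be independent.

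The central technical difficulty, and where I expect to spend most of the effort, is designing the dead-end routing so that auxiliary states coming from distinct gadgets cannot be merged by any binary word unless the main projections simultaneously trace out a valid synchronizing run of $A$. Once this incompatibility is established, the projection argument closes the gap, the quadratic state blow-up yields $n=\Theta(p^2)$, and Zuckerman's $p^{1-\varepsilon}$ inapproximability for {\sc Independent Set} propagates through the reduction to the claimed $O(n^{1/2-\varepsilon})$ lower bound for binary {\sc Max Sync Set}.
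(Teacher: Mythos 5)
Your high-level route --- reducing the alphabet of the automaton from Theorem~\ref{thm-inapprox-alph} by encoding each letter as a binary word and interposing per-state decoding gadgets --- is not the route the paper takes, but the paper explicitly acknowledges it (in the remark following Theorem~\ref{thm-inapprox-gen}) as a viable alternative ``with a slight modification of the technique used in \cite{Vorel2016}''. The paper instead builds a layered binary automaton directly from $G$: layer $i$ simulates the decision of whether to put $v_i$ into an independent set, a $p$-state cycle attached to $f$ keeps states from different layers from merging, and --- crucially --- the first layer is duplicated $p$ times so that the intended synchronizing set, of size $p\alpha(G)+1$, dominates every spurious one (which is shown to have size at most $2p$).

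That amplification step is exactly what your proposal is missing, and without it the reduction is not gap-preserving. The obstacle is not, as you suggest, making dead-end states from distinct gadgets incompatible; it is that the decoding gadgets themselves contain large synchronizing sets that have nothing to do with $G$. Concretely, with the code $w_i=1^i0$ the gadget attached to a state $q$ of $A$ is a path of partial-decode states $\hat q_1,\dots,\hat q_p$ (with $\hat q_j$ reached after reading $1^j$), and $\hat q_j$ must map under $0$ to the entry copy of $\delta(q,\tilde v_j)$. Since $f$ is a sink of $A$, the single letter $0$ sends all $p$ states $\hat f_1,\dots,\hat f_p$ to the entry copy of $f$: a synchronizing set of size $p$ present for every input graph; likewise $\{\hat q_j \mid v_iv_j\in E\}$ in the gadget of $s_i$ collapses onto $t_i$, giving a set of size $\deg(v_i)$. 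On the ``no'' side of Zuckerman's gap $\alpha(G)$ can be as small as $p^{\varepsilon}$, so these $\Theta(p)$-size junk sets make the optimum of {\sc Max Sync Set} in $A'$ essentially independent of $\alpha(G)$, and the gap collapses. To repair this you must amplify as the paper does: replicate each entry copy of $s_1,\dots,s_p$ (with its outgoing transitions) $p$ times, so that the intended set has size $p\alpha(G)+1$ while all junk sets remain $O(p)$; this keeps $n=\Theta(p^2)$ and restores the $O(n^{\frac{1}{2}-\varepsilon})$ bound. The projection argument you sketch for the upper bound would still need to be carried out, but the amplification is the essential missing ingredient.
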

\begin{proof} Again, we construct a gap-preserving reduction from the {\sc Independent Set} problem extending the proof of Theorem \ref{thm-inapprox-alph}. Given a graph $G = (V, E), V = \{v_1, v_2, \ldots, v_p\}$, we construct an automaton $A =(Q, \Sigma, \delta)$ in the following way. Let $\Sigma = \{0, 1\}$. First we construct the main gadget $A_{main}$ having a synchronizing set of states of size $\alpha(G)$. For each vertex $v_i \in V, 1 \le i \le p$, we construct a set of new states $L_i = V_i \cup U_i$ in~$Q$, where $V_i = \{v^{(i)}_j : 1 \le j \le p\}, U_i = \{u^{(i)}_j : 1 \le j \le p\}$. We call $L_i$ the $i$th {\em layer} of $A_{main}$. We also add a state $f$ to $Q$. For each $i$, $1 \le i \le p$, the transition function $\delta$ imitates choosing taking the vertices $v_1, v_2, \ldots, v_p$ into an independent set one by one and is defined as:
	
	\[ \delta(v^{(i)}_j, 0) = \left\{ 
	\begin{array}{l l}
	u^{(i)}_j & \quad \mbox{if $i = j$,}\\
	v^{(i + 1)}_j & \quad \mbox{otherwise}\\
	\end{array} \right. \]
	
	\[ \delta(v^{(i)}_j, 1) = \left\{ 
	\begin{array}{l l}
	u^{(i)}_j & \quad \mbox{if there is an edge $v_i v_j \in E$,}\\
	v^{(i + 1)}_j & \quad \mbox{otherwise}\\
	\end{array} \right. \]
	
	Here all $v^{(n + 1)}_j, 1 \le j \le p$, coincide with $f$. For each state $u^{(i)}_j$, the transitions for both letters $0$ and $1$ lead to the originating state (i.e. they are self-loops).
	
	We also add an $p$-state cycle $A_{cycle}$ attached to $f$. It is a set of $p$ states $c_1, \ldots, c_p$, mapping $c_i$ to $c_{i + 1}$ and $c_p$ to $c_1$ regardless of the input symbol. Finally, we set $c_1$  to coincide with $f$. Thus we get the automaton $A_1$. Figure \ref{fig:exampleA} presents an example of $A_1$ for a graph with three vertices $v_1, v_2, v_3$ and one edge~$v_2v_3$.
	
	\setlength{\unitlength}{2.2pt}
	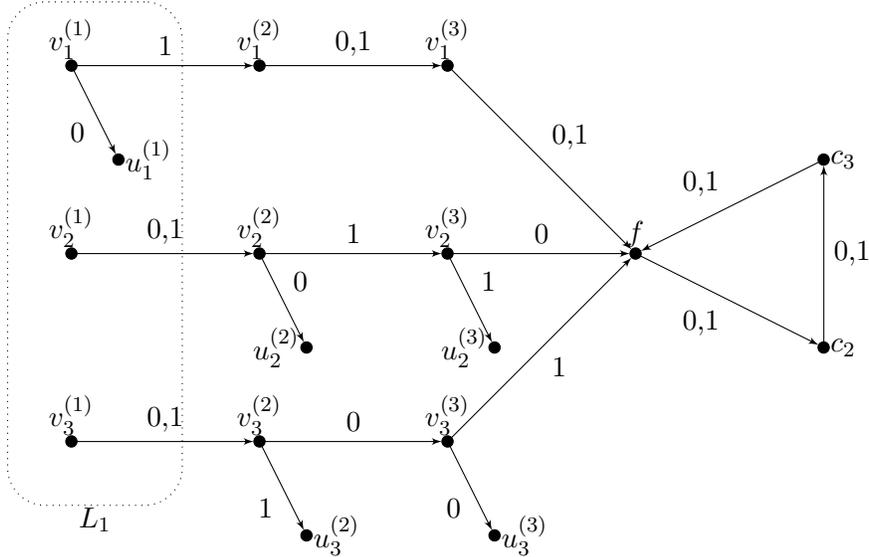
\begin{figure}[hbt]
		\begin{center}
			
			\begin{tikzpicture}[->,>=latex',
			vertex/.style={circle, draw=black, fill, minimum width=1.5mm, inner sep=0pt, outer sep=0pt},
			every label/.style={inner sep=0pt, minimum width=0pt, label distance=0.1mm},
			yscale=-2.5,
			xscale=2.5
			]
			\graph[nodes=vertex, empty nodes, no placement] {
				{
					v11[x=0,y=0,label=above:$v^{(1)}_1$] -> [edge label=1]
					v12[x=1,y=0,label=above:$v^{(2)}_1$] -> [edge  label={$0$,$1$}]
					v13[x=2,y=0,label=above:$v^{(3)}_1$] -> [edge  label={$0$,$1$}]
					f[x=3,y=1,label=above:$f$]
				};
				{
					v11 -> [edge label=0, swap]	u11[x=0.25,y=0.5,label=right:$u^{(1)}_1$]
				};
				{
					v21[x=0,y=1,label=above:$v^{(1)}_2$] -> [edge  label={$0$,$1$}]
					v22[x=1,y=1,label=above:$v^{(2)}_2$] -> [edge label=1]
					v23[x=2,y=1,label=above:$v^{(3)}_2$] -> [edge label=0]
					f
				};
				{
					v22 -> [edge label=0]	u22[x=1.25,y=1.5,label=left:$u^{(2)}_2$]
				};
				{
					v31[x=0,y=2,label=above:$v^{(1)}_3$] -> [edge  label={$0$,$1$}]
					v32[x=1,y=2,label=above:$v^{(2)}_3$] -> [edge label=0]
					v33[x=2,y=2,label=above:$v^{(3)}_3$] -> [edge label=1,swap]
					f
				};
				{
					v32 -> [edge label=1,swap]	u32[x=1.25,y=2.5,label=right:$u^{(2)}_3$]
				};
				{
					v23 -> [edge label=1]	u23[x=2.25,y=1.5,label=left:$u^{(3)}_2$]
				};
				{
					v33 -> [edge label=0, swap]	u33[x=2.25,y=2.5,label=right:$u^{(3)}_3$]
				};
				{
					f -> [edge label={$0$,$1$},swap]
					c1[x=4,y=1.5,label=right:$c_2$] -> [edge label={$0$,$1$},swap]
					c2[x=4,y=0.5,label=right:$c_3$] -> [edge label={$0$,$1$},swap]
					f
				};
			};
			\node[rectangle,dotted,draw,fit=(v11)(u11)(v21)(v31),rounded corners=5mm,inner sep=22pt,label=below:$L_1$] {};
			
			\end{tikzpicture}\textsl{}
			\caption{An example of $A_1$. Unachievable states and self-loops are omitted.} \label{fig:exampleA}
		\end{center}
	\end{figure}
	The main property of $A_1$ is claimed by the following lemma.
	
	\begin{lemma}\label{lemma-main}
		The size of a maximum synchronizing set of states from the first layer in $A_1$ equals~$\alpha(G)$.
	\end{lemma}
	\begin{proof} Let $I$ be a maximum independent set in $G$. Consider a word $w$ of length $p$ such that its $i$th letter is equal to $0$ if $v_i \notin I$ and to $1$ if $v_i \in I$. By the construction of $A_1$, this word synchronizes the set $\{v^{(1)}_j \mid v_j \in I\}$. Conversely, a synchronizing set of at least three states from the first layer can be mapped only to some vertex of $A_{cycle}$, and the corresponding set of vertices in $G$ is an independent set. \end{proof}
	
	Some layer in the described construction can contain a synchronizing subset of size larger than the maximum synchronizing subset of the first layer. To avoid that, we modify $A_1$ by repeating each state (with all transitions) of the first layer $p$ times. More formally, we replace each pair of states $v^{(1)}_j$, $u^{(1)}_j$ with $p$ different pairs of states such that in each pair all the transitions repeat the transitions between $v^{(1)}_j$, $u^{(1)}_j$, and all the other states of the automaton. We denote the automaton thus constructed as $A$.
	
	The following lemma claims that the described procedure of constructing $A$ from $G$ is a gap-preserving reduction from the {\sc Independent Set} problem in graphs to the {\sc Max Sync Set} problem in binary automata.
	
	\begin{lemma}\label{lemma-copy}
		If $\alpha(G) > 1$, then the maximum size of a synchronizing set in $A$ is equal to~$p\alpha(G)~+~1$.
	\end{lemma}
	\begin{proof} Note that due to the construction of $A_{cycle}$, each synchronizing set of $A$ is either a subset of a single layer of $A$ together with a state in $A_{cycle}$ or a subset of a set $\{v^{(i)}_j \mid 2 \le i \le \ell\} \cup \{u^{(\ell)}_j\}$ for some $\ell$ and $j$, together with $p$ new states that replaced $v^{(1)}_j$. Consider the first case. If some maximum synchronizing set $S$ contains a state from the $i$th layer of $A$ and $i > 1$, then its size is at most $p + 1$. A maximum synchronizing set containing some states from the first layer of $A$ consists of $p\alpha(G)$ states from this layer (according to Lemma \ref{lemma-main}) and some state of $A_{cycle}$, so this set has size $p\alpha(G) + 1 \ge 2p + 1$. In the second case, the maximum size of a synchronizing set is at most $p + (p - 1) + 1 = 2p < p \alpha(G) + 1$.
	\end{proof}
	
	It is easy to see that the constructed reduction is gap-preserving with a gap $\Theta(p^{1 - \varepsilon}) = \Theta(n^{\frac{1}{2} - \varepsilon})$, where $n$ is the number of states in $A$, as $n = \Theta(p^2)$. Thus the {\sc Max Sync Set} for $n$-state binary automata cannot be approximated in polynomial time within a factor of $O(n^{\frac{1}{2} - \varepsilon})$ for any $\varepsilon > 0$ unless~P~=~NP, which concludes the proof of the theorem. \end{proof}

Theorem \ref{thm-inapprox-gen} can also be proved by using Theorem \ref{thm-inapprox-alph} and a slight modification of the technique used in \cite{Vorel2016} for decreasing the size of the alphabet. However, in this case the resulting automaton is far from being weakly acyclic, while the automaton in the proof of Theorem \ref{thm-inapprox-alph} has only one cycle. The next theorem shows how to modify our technique to prove an inapproximability bound for {\sc Max Sync Set} in binary weakly acyclic automata.

\begin{theorem}\label{thm-inapprox-monot}
	The {\sc Max Sync Set} problem for binary weakly acyclic $n$-state automata cannot be approximated in polynomial time within a factor of $O(n^{\frac{1}{3} - \varepsilon})$ for any $\varepsilon > 0$ unless P = NP.
\end{theorem}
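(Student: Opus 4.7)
The plan is to extend the binary automaton construction from the proof of Theorem~\ref{thm-inapprox-gen}, whose only non-trivial cycle comes from the attached gadget $A_{cycle}$; the main gadget $A_{main}$ is already weakly acyclic. I would keep $A_{main}$ essentially intact and replace $A_{cycle}$ by a purely weakly acyclic substitute that plays the same role in the analysis, at the price of a further polynomial blowup in the number of states.

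The role of $A_{cycle}$ in the previous proof is to prevent a synchronizing subset from ``parking'' states at $f$, which is what forced a large synchronizing subset drawn from the first layer to correspond to a coherent choice of independent set. To emulate this in a weakly acyclic manner, I would attach to $f$ a long chain (or a small family of chains) of fresh states of length $\Theta(p^2)$, with transitions designed so that two copies of $f$ reached at different times end up at distinct ``phases'' of the chain and cannot be merged into a common post-image by any continuation of the word except within a tightly controlled window. One natural attempt is to make the chain require a specific bit pattern for traversal and to branch by letter near its end, so that arrival-time differences translate into irreconcilable letter demands.

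To keep the independent-set construction the dominant synchronizing subset over any subset arising from the chain or from mixtures of layers, I would replicate each first-layer pair $v^{(1)}_j, u^{(1)}_j$ of order $p^2$ times (rather than $p$ times, as before Lemma~\ref{lemma-copy}). The synchronizing set corresponding to a maximum independent set then has size $\Theta(p^2 \alpha(G))$, the total state count is $n = \Theta(p^3)$, and Zuckerman's inapproximability of {\sc Independent Set} yields a gap of $\Theta(p^{1-\varepsilon}) = \Theta(n^{1/3 - \varepsilon})$, as required.

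The main obstacle, as in Theorem~\ref{thm-inapprox-gen}, is the structural case analysis analogous to Lemma~\ref{lemma-copy}: one must rule out every ``cheating'' synchronizing subset combining a slice of $A_{main}$ with a large segment of the chain. This is delicate in the weakly acyclic setting because any substitute for $A_{cycle}$ must eventually terminate at sink states, and sinks can in principle absorb arbitrarily many states. Designing the chain transitions so that at most a short window of positions share a common post-synchronization image, and verifying this through an exhaustive layer-by-layer argument, is where I expect the bulk of the technical difficulty to lie; once the gadget is in place, the size calculation and the final gap analysis are routine modifications of the proof of Theorem~\ref{thm-inapprox-gen}.
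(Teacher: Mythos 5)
Your high-level parameters match the paper's proof exactly: replicate each first-layer pair $p^2$ times, get $n = \Theta(p^3)$ states, and inherit the $\Theta(p^{1-\varepsilon}) = \Theta(n^{1/3-\varepsilon})$ gap from Zuckerman's result via an analogue of Lemma~\ref{lemma-copy}. However, the central technical component of your plan --- a weakly acyclic chain gadget of length $\Theta(p^2)$ attached to $f$ that emulates the ``anti-parking'' role of $A_{cycle}$ --- is never actually constructed, and you explicitly defer both its design and the case analysis showing it defeats all cheating subsets to future work, acknowledging that this is ``where the bulk of the technical difficulty lies.'' As written, the proposal is therefore incomplete: without a concrete gadget and a proof of the analogue of Lemma~\ref{lemma-copy} for it, the claim that the maximum synchronizing set has size $\Theta(p^2\alpha(G))$ is unsupported. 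The difficulty you identify is real --- in a weakly acyclic automaton every long chain terminates in sinks, and sinks can absorb unboundedly many states, so controlling which subsets merge at the end of the chain is exactly the hard part.

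The paper's proof shows that this entire difficulty is self-inflicted: it simply \emph{discards} $A_{cycle}$ altogether, keeps only the (already weakly acyclic) gadget $A_{main}$ with the first layer replicated $p^2$ times, and lets $f$ be a sink. The price is that states from layers $2,\dots,p$ may now also be mapped to $f$, so the maximum synchronizing set is no longer exactly $p^2\alpha(G)+1$; but it is sandwiched between $p^2\alpha(G)$ and $p^2\alpha(G) + p(p-1) + 1$, and since the slack $p(p-1)+1 = O(p^2) = O(p^2\alpha(G))$ is dominated up to a constant factor, the reduction remains gap-preserving. In other words, once the first layer is inflated by a factor of $p^2$, approximate control of the optimum suffices and no exact structural lemma (hence no replacement for $A_{cycle}$) is needed. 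If you want to salvage your route, you would have to actually exhibit the chain gadget and carry out the exhaustive case analysis you describe; the paper's observation lets you skip both.
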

\begin{proof} We modify the construction of the automaton $A_{main}$ from Theorem \ref{thm-inapprox-gen} in the following way. We repeat each state (with all transitions) of the first layer $p^2$ times in the same way as it is done in the proof of Theorem \ref{thm-inapprox-gen}. Thus we get a weakly acyclic automaton $A_{wa}$ with $n = \Theta(p^3)$ states, where $p$ is the number of vertices in the graph $G$. Furthermore, similar to Lemma \ref{lemma-copy}, the size of the maximum synchronizing set of states in $A_{wa}$ is between $p^2\alpha(G)$ and $p^2\alpha(G) + p(p-1) + 1$, because some of the states from the layers other than the first can be also mapped to $f$. Both of the values are of order $\Theta(p^2\alpha(G))$, thus we have an gap-preserving reduction providing the inapproximability within a factor of $O(p^{1 - \varepsilon}) = O(n^{\frac{1}{3} - \varepsilon})$ for any $\varepsilon > 0$, where $n$ is the number of states in $A_{wa}$.\end{proof}

We finish by noting that for two classes of automata the {\sc Max Sync Set} problem is solvable in polynomial time.

\begin{proposition}\label{thm-polytime-unary}
	The problem {\sc Max Sync Set} can be solved in polynomial time for unary automata.
\end{proposition}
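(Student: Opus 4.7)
The plan is to exploit the very rigid structure of a unary automaton: with a single letter $a$, the transition function is just a map $\delta:Q\to Q$, so the underlying graph is a functional graph whose weakly connected components each consist of one simple cycle with trees hanging off it. Since there is only one letter, a subset $S\subseteq Q$ is synchronizing iff $|\delta^k(S)|=1$ for some $k\ge 0$, i.e.\ iff all states of $S$ eventually meet at a common state under iterated application of $a$.

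First I would compute the weakly connected components of the functional graph and, inside each component $C$, identify the unique simple cycle and its length $\ell_C$; this is standard and takes linear time. Fix a distinguished cycle state and label the cycle states $c_0,c_1,\ldots,c_{\ell_C-1}$ with $\delta(c_i)=c_{(i+1)\bmod \ell_C}$. For every state $q\in C$ compute the tail length $t_q$ (the least $t\ge 0$ with $\delta^t(q)$ on the cycle) and the landing index $i_q\in\{0,\ldots,\ell_C-1\}$ such that $\delta^{t_q}(q)=c_{i_q}$. This can be done in polynomial time by a single traversal of the trees attached to the cycle.

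The key observation is the following characterization: for $p,q$ in the same component and any $k\ge\max(t_p,t_q)$ we have $\delta^k(p)=c_{(i_p+k-t_p)\bmod\ell_C}$ and $\delta^k(q)=c_{(i_q+k-t_q)\bmod\ell_C}$, so $\delta^k(p)=\delta^k(q)$ iff $i_p-t_p\equiv i_q-t_q\pmod{\ell_C}$. Consequently, a subset $S\subseteq Q$ is synchronizing iff all its states lie in one component $C$ and share a common value of the ``phase'' $\varphi(q):=(i_q-t_q)\bmod\ell_C$. Thus the synchronizing subsets of $Q$ are exactly (the subsets of) the equivalence classes of the relation $p\sim q\Leftrightarrow \varphi(p)=\varphi(q)$ restricted to each component.

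Given this, the algorithm is immediate: for each component, group its states by the value of $\varphi(\cdot)$, and output a class of maximum cardinality across all components. Computing $t_q$, $i_q$ and $\varphi(q)$ for every state, bucketing them, and taking the largest bucket all run in polynomial time, which proves the proposition. The only real content is the phase characterization above; everything else is routine graph manipulation, so there is no serious obstacle.
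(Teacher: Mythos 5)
Your proof is correct and rests on the same structural fact as the paper's: in the functional graph of a unary automaton every state falls onto a cycle, after which no further merging occurs, so the maximum synchronizing set is the largest class of states that eventually coincide. The paper obtains this partition more directly by simply applying $\delta^{n}$ to all states and taking the largest preimage of a single state, which avoids your explicit phase computation but is the same idea.
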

\begin{proof} Consider the digraph $G$ induced by states and transitions of an unary automaton $A$. By definition, each vertex of $G$ has outdegree $1$. Thus, the set of the vertices of $G$ can be partitioned into directed cycles and a set of vertices not belonging to any cycle, but lying on a directed path leading to some cycle. Let $n$ be the number of states in $A$. It is easy to see that after performing $n$ transitions, each state of $A$ is mapped into a state in some cycle, and all further transitions will not map any two different states to the same state. Thus, it is enough to perform $n$ transitions and select such state $s$ that the maximum number of states are mapped to $s$.\end{proof}

A more interesting case is covered by the following proposition. An automaton $A = (Q, \Sigma, \delta)$ is called \textit{Eulerian} if there exists $k$ such that for each state $q \in Q$ there are exactly $k$ pairs $(q', a)$, $q' \in Q$, $a \in \Sigma$, such that $\delta(q', a) = q$.

\begin{proposition}\label{thm-polytime-eulerian}
	The problem {\sc Max Sync Set} can be solved in polynomial time for Eulerian automata.
\end{proposition}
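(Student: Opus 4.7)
The plan is to exploit the rich structure of Eulerian automata, in particular the way pair synchronization governs set synchronization in this setting.

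First I would observe that, by a double-counting argument, the Eulerian condition forces $k = |\Sigma|$: the total in-degree and total out-degree of the underlying multigraph both equal $|Q|\cdot|\Sigma|$, and if every in-degree is $k$ then $k = |\Sigma|$. Hence in-degrees and out-degrees agree at every state, so the underlying directed multigraph is Eulerian in the usual graph-theoretic sense, and each weakly connected component is strongly connected. Consequently $A$ decomposes into vertex-disjoint strongly connected Eulerian subautomata $C_1,\dots,C_m$ with no transitions between different components. Any synchronizing set is confined to a single $C_i$, so it suffices to solve {\sc Max Sync Set} inside each $C_i$ and return the largest answer.

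For a strongly connected Eulerian automaton $C$, I would invoke the classical rank-equipartition property of Eulerian automata: if $C$ has rank $r$, then $r$ divides $|C|$, and for every minimum-rank word $w$ the fibers $\delta^{-1}(\cdot,w)\cap C$ all have size $|C|/r$ and coincide with the equivalence classes of the synchronizable-pair relation $p \sim q$ (``some word maps $p$ and $q$ to a common state''). In particular $\sim$ is an equivalence relation on $C$, and a subset $S \subseteq C$ is a synchronizing set if and only if $S$ lies inside a single $\sim$-class; hence the largest synchronizing set in $C$ has size exactly $|C|/r$ and can be read off as any one of these classes.

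The algorithm is then transparent: compute the strongly connected components of $A$ in linear time; inside each $C_i$, build the standard pair automaton and compute the synchronizable-pair relation $\sim$ in polynomial time by reachability; identify the $\sim$-classes and output the largest one over all $C_i$. The main obstacle, and the heart of the argument, is the rank-equipartition claim for strongly connected Eulerian automata. I would prove it by noting that the uniform distribution on states is invariant under the action of every letter (because every in-degree equals $|\Sigma|$), so the fibers of any word have a combinatorial regularity that forces equal sizes at minimum rank. Strong connectivity is then used to realign pair-synchronizing words into a single word synchronizing an entire fiber, giving the ``$\Leftarrow$'' direction of the characterization and closing the argument.
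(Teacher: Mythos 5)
Your algorithmic skeleton is sound and close in spirit to the paper's: the paper also reduces everything to the structural fact that in an Eulerian automaton the inclusion-maximal synchronizing sets form an equipartition of the state set. The difference is that the paper simply cites this as Theorem 2.1 of Friedman and then finds a minimum-rank word in polynomial time (Rystsov) and outputs one of its fibers, whereas you compute the pairwise-synchronizability relation $\sim$ and output a class. Your preliminary reduction to strongly connected components is correct ($k=|\Sigma|$ forces in-degree to equal out-degree everywhere, so weak components are strong) and is even a useful addition, since Friedman's theorem is stated for strongly connected automata. Had you invoked the equipartition/partition-into-maximal-synchronizing-sets theorem as a citation, your pair-automaton algorithm would be a correct alternative route.

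The gap is exactly where you announce ``the heart of the argument.'' First, the claim that the uniform distribution is invariant under the action of \emph{every} letter is false: the Eulerian condition only constrains the total in-degree summed over all letters, and a single letter need not be a permutation (two states with letter $0$ sending both to $q_1$ and letter $1$ sending both to $q_2$ is Eulerian). What is true is that the uniform distribution is stationary for the \emph{average} of the letter actions, and Friedman's proof is a genuine counting argument built on that, not an immediate ``regularity'' observation. Second, and more seriously, you attribute the implication ``pairwise synchronizable $\Rightarrow$ synchronizable as a set'' to strong connectivity. Strong connectivity cannot be the reason: for general strongly connected binary automata, deciding whether a subset is synchronizing is PSPACE-complete (Vorel, cited in this paper), while pairwise synchronizability is decidable in polynomial time via the pair automaton, so the latter cannot characterize the former in that class. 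That synchronizable pairs form an equivalence relation whose classes are themselves synchronizing, and that no synchronizing set straddles two fibers of a minimum-rank word, is precisely the nontrivial content of Friedman's theorem and requires the Eulerian weight argument; ``realigning pair-synchronizing words'' is not a proof. As written, the correctness of your algorithm rests on an unproved (though true and citable) claim.
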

\begin{proof}
	According to Theorem 2.1 in \cite{Friedman1990} (see also \cite{Kari2003} for the discussion of the Eulerian case), each word of minimum rank with respect to an Eulerian automaton synchronizes the sets $S_1, S_2, \ldots, S_r$ forming a partition of the states of the automaton into inclusion-maximal synchronizing sets. Moreover, according to this theorem all inclusion-maximal synchronizing sets in an Eulerian automaton are of the same size, thus each inclusion-maximal synchronizing set has maximum cardinality. A word of minimum rank with respect to an automaton can be found in polynomial time \cite{Rystsov1992}, which concludes the proof.
\end{proof}

\section{Computing the Rank of a Subset of States}\label{sec-rank}

Assume that we know that the current state of the automaton $A$ belongs to a subset $S$ of its states. Even if it is not possible to synchronize $S$, it can be reasonable to minimize the size of the set of possible states of $A$, reducing the uncertainty of the current state as much as possible. One way to do it is to map $S$ to a set $S'$ of smaller size by applying some word to $A$. Recall that the size of the smallest such set $S'$ is called the rank of $S$. Consider the following problem of finding the rank of a subset of states in a given automaton.

\begin{tabular}{||p{30em}}
	~{\sc Set Rank} \\
	~{\em Input}: An automaton $A$ and a set $S$ of its states;\\
	~{\em Output}: The rank of $S$ in $A$. 
\end{tabular}

The rank of an automaton, that is, the rank of the set of its states, can be computed in polynomial time \cite{Rystsov1992}. However, since the automaton in the proof of PSPACE-completeness of {\sc Sync Set} in \cite{Rystsov1983} has rank $2$ (and thus each subset of states in this automaton has rank either $1$ or $2$), it follows immediately that there is no polynomial $c$-approximation algorithm for the {\sc Set Rank} problem for any $c < 2$ unless P = PSPACE. It also follows that checking whether the rank of a subset of states equals the rank of the whole automaton is PSPACE-complete. For monotonic weakly acyclic automata, this problem is hard to approximate within a factor of $\frac{9}{8} - \epsilon$ for any $\epsilon > 0$ \cite{Ryzhikov2017Monotonic}. For general weakly acyclic automata it is possible to get much stronger bounds, as it is shown by the results of this section.

We shall need the {\sc Chromatic Number} problem. A {\em proper coloring} of a graph $G = (V, E)$ is a coloring of the set $V$ in such a way that no two adjacent vertices have the same color. The chromatic number of $G$, denoted $\chi(G)$, is the minimum number of colors in a proper coloring of $G$. Recall that a set of vertices in a graph is called {\em independent} if no two vertices in this set are adjacent. A proper coloring of a graph can be also considered as a partition of the set of its vertices into independent sets.

\begin{tabular}{||p{30em}}
	~{\sc Chromatic Number}\\
	~{\em Input}: A graph $G$;\\
	~{\em Output}: The chromatic number of $G$.
\end{tabular}

This problem cannot be approximated within a factor of $O(p^{1 - \epsilon})$ for any~$\epsilon~>~0$ unless P = NP, where $p$ is the number of vertices in $G$ \cite{Zuckerman2006}.

\begin{theorem}\label{thm-rank}
	The {\sc Set Rank} problem for $n$-state weakly acyclic automata with alphabet of size $O(\sqrt{n})$ cannot be approximated within a factor of $O(n^{\frac{1}{2} - \epsilon})$ for any $\epsilon > 0$ unless P = NP.
\end{theorem}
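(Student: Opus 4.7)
The plan is to give a gap-preserving reduction from \textsc{Chromatic Number} by iterating the independent-set gadget of Theorem~\ref{thm-inapprox-alph} into layers indexed by a ``candidate color''. Given a graph $G = (V, E)$ with $V = \{v_1, \ldots, v_p\}$, I would build a weakly acyclic automaton $A$ over the alphabet $\Sigma = \{\tilde v_1, \ldots, \tilde v_p, \#\}$, where the letter $\tilde v_i$ is a ``vertex letter'' and $\#$ is a separator that advances the color. The state set is organised into $p+1$ layers; in layer $t$ there are three states $s_j^{(t)}$ (start), $\sigma_j^{(t)}$ (pending commitment to color $t$) and $\tau_j^{(t)}$ (conflict in color $t$) for each vertex $v_j$, plus a single sink $f_t$. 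This yields $|Q| = O(p^2)$ and $|\Sigma| = p+1$, so $|\Sigma| = O(\sqrt{n})$.

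The transitions within layer $t$ mimic the gadget of Theorem~\ref{thm-inapprox-alph}: the letter $\tilde v_j$ sends $s_j^{(t)}$ to $\sigma_j^{(t)}$, any letter $\tilde v_i$ with $v_i \in N(v_j)$ sends both $s_j^{(t)}$ and $\sigma_j^{(t)}$ to $\tau_j^{(t)}$, $\tau_j^{(t)}$ is absorbing for vertex letters, and every other vertex letter is a self-loop. The separator $\#$ resolves the round: $\sigma_j^{(t)} \to f_t$ (``$v_j$ successfully takes color $t$''), while $s_j^{(t)}, \tau_j^{(t)} \to s_j^{(t+1)}$ (``try the next color''); each $f_t$ is a sink for every letter. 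Weak acyclicity is then immediate, since all non-self transitions either go from $s$ to $\sigma$ or $\tau$ inside a layer, from $\sigma$ to $f_t$, or strictly advance the layer index via $\#$, which gives an obvious topological order on~$Q$.

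I would take $S = \{s_j^{(1)} : 1 \le j \le p\}$ and prove $\mathrm{rank}(S) = \chi(G)$ with matching bounds. For the upper bound, an optimal coloring $V = I_1 \cup \ldots \cup I_{\chi(G)}$ is realised by $w = w_1 \# w_2 \# \cdots \# w_{\chi(G)}$, where $w_t$ is the concatenation of $\tilde v_i$ for $v_i \in I_t$ in any order; independence of $I_t$ guarantees, by the analysis from Theorem~\ref{thm-inapprox-alph}, that every $v_j \in I_t$ reaches $\sigma_j^{(t)}$ after $w_t$ and hence $f_t$ after the following $\#$, producing an image of size exactly $\chi(G)$. For the lower bound, an arbitrary word factors as $w_1 \# w_2 \# \cdots \# w_{r-1} \# w_r$ (with $w_r$ possibly empty or without a trailing $\#$); the set $I_t$ of vertices routed to $f_t$ is forced to be independent in $G$, because any two adjacent vertices in $I_t$ would have sent each other to the respective $\tau$-states. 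Vertices never routed to any $f_t$ finish in a state of the form $s_j^{(r)}$, $\sigma_j^{(r)}$ or $\tau_j^{(r)}$, which all carry the index $j$ explicitly, so they remain pairwise distinguishable and also distinct from every $f_t$. Hence the rank is at least (number of used colors) $+$ (number of unrouted vertices), and completing the partial partition by adjoining each unrouted vertex as its own singleton color class gives a proper coloring of $V$; so the rank is at least $\chi(G)$.

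The main technical subtlety I expect is exactly this last bookkeeping for the trailing fragment $w_r$: one has to verify that the ``in-flight'' states $s_j^{(r)}, \sigma_j^{(r)}, \tau_j^{(r)}$ are distinct across vertices and disjoint from all sinks $f_t$, and that extending by singleton color classes does not shrink the rank. Once this is settled, since Zuckerman's bound gives $O(p^{1-\varepsilon})$-inapproximability of $\chi(G)$, the reduction transfers to an $O(n^{\frac12 - \varepsilon/2}) = O(n^{\frac12 - \varepsilon'})$-inapproximability for \textsc{Set Rank} on weakly acyclic $n$-state automata with alphabet of size $O(\sqrt n)$, which is the statement of the theorem.
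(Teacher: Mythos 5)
Your proposal is correct and follows essentially the same route as the paper: a gap-preserving reduction from \textsc{Chromatic Number} using $p$ copies of the independent-set gadget from Theorem~\ref{thm-inapprox-alph} chained by a switching/separator letter, with the key invariant $\mathrm{rank}(S)=\chi(G)$ and the same parameter accounting $n=\Theta(p^2)$, $|\Sigma|=p+1$. The only difference is cosmetic: the paper routes $s_i^{(k)}$ to the sink $f^{(k)}$ directly by the letter $\tilde v_i$ (with conflict states $t_i^{(k)}$ absorbing), whereas you insert an intermediate ``pending'' state and let the separator perform the final routing --- both variants enforce that each sink collects an independent set.
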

\begin{proof}
	We shall prove this theorem by constructing a gap-preserving reduction from the {\sc Chromatic Number} problem, extending the technique in the proof of Theorem \ref{thm-inapprox-alph}. Given a graph $G = (V, E)$, $V = \{v_1, v_2, \ldots, v_p\}$, we construct an automaton $A = (Q, \Sigma, \delta)$ as follows. The alphabet $\Sigma$ consists of letters $\tilde{v}_1, \ldots, \tilde{v}_p$ corresponding to the vertices of $G$, together with a {\em switching} letter $\nu$. We use $p$ identical {\em synchronizing} gadgets $T^{(k)}, 1 \le k \le p$, such that each gadget synchronizes a subset of states corresponding to an independent set in $G$. Gadget $T^{(k)}$ consists of a set $\{s^{(k)}_i, t^{(k)}_i \mid 1 \le i \le p\} \cup \{f^{(k)}\}$ of states.
	
	The transition function $\delta$ is defined as follows. For each gadget $T^{(k)}$, for each $1 \le i \le p$, the state $s^{(k)}_i$ is mapped to $f^{(k)}$ by the letter $\tilde{v}_i$. For each $v_iv_j \in E$ the state $s^{(k)}_i$ is mapped to $t^{(k)}_i$ by the letter $\tilde{v}_j$, and the state $s^{(k)}_j$ is mapped to $t^{(k)}_j$ by the letter $\tilde{v}_i$. All yet undefined transitions corresponding to the letters $\tilde{v}_1, \ldots, \tilde{v}_p$ map a state to itself.
	
	It remains to define the transitions corresponding to $\nu$. For each $1 \le k \le p - 1$, $\nu$ maps $t^{(k)}_i$ and $s^{(k)}_i$ to $s^{(k + 1)}_i$, and $f^{(k)}$ to itself. Finally, $\nu$ acts on all states in $T^{(p)}$ as a self-loop.
	
	Define $S = \{s^{(1)}_i \mid 1 \le i \le p\}$. We shall prove that the rank of $S$ is equal to the chromatic number of $G$. Consider a proper coloring of $G$ with the minimum number of colors and let $I_1 \cup \ldots \cup I_{\chi(G)}$ be the partition of $G$ into independent sets defined by this coloring. For each $I_j$, consider a word $w_j$ obtained by concatenating the letters corresponding to the vertices in $I_j$ in some order. Consider now the word $w_1 \nu w_2 \nu \ldots \nu w_{\chi(G)}$. This word maps the set $S$ to the set $\{f^{(i)} \mid 1 \le i \le \chi(G)\}$, which proves that the rank of $S$ is at most $\chi(G)$.
	
	In the other direction, note that after each reading of $\nu$ all states except $f^{(k)}, 1 \le k \le p - 1$, are mapped to the next synchronizing gadget (except the last gadget $T^{(p)}$ which is mapped to itself). By definition of $\delta$, only a subset of states corresponding to an independent set of vertices can be mapped to some particular $f^{(k)}$, and the image of $S$ after reading any word is a subset of the states in some gadget together with some of the states $f^{(k)}, 1 \le k \le p$. Hence, the rank of $S$ is at least $\chi(G)$.
	
	Thus we have a gap-preserving reduction from the {\sc Chromatic Number} problem to the {\sc Set Rank} problem with gap $\Theta(p^{1 - \varepsilon})$ for any $\varepsilon > 0$. It is easy to see that $n = \Theta(p^2)$, $A$ is weakly acyclic and its alphabet has size $O(\sqrt{n})$, which finishes the proof of the theorem.
	 \end{proof}

Using the classical technique of reducing the alphabet size (see \cite{Vorel2016}), $O(n^{\frac{1}{3} - \epsilon})$ inapproximability can be proved for binary automata. To prove the same bound for binary weakly acyclic automata, we have to refine the technique of the proof of the previous theorem.

\begin{theorem}\label{thm-binary-rank}
	The {\sc Set Rank} problem for $n$-state binary weakly acyclic automata cannot be approximated within a factor of $O(n^{\frac{1}{3} - \epsilon})$ for any $\epsilon > 0$ unless P = NP.
\end{theorem}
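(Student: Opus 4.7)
The plan is to refine the reduction of Theorem \ref{thm-rank} so that each letter-reading step in a synchronizing gadget becomes a binary sub-gadget in the spirit of $A_{main}$ from the proof of Theorem \ref{thm-inapprox-gen}, while keeping the chaining into $p$ independent-set synchronizers that together encode a proper coloring. Starting from a graph $G = (V, E)$ with $V = \{v_1, \ldots, v_p\}$, I would build $p$ chained gadgets $T^{(1)}, \ldots, T^{(p)}$: each $T^{(k)}$ contains live states $v^{(i, k)}_j$ (vertex $v_j$ at internal layer $i$ of gadget $k$), dead states $u^{(i, k)}_j$, and a single sink $f^{(k)}$. Inside $T^{(k)}$ the binary transitions imitate those of $A_{main}$: at layer $i$, reading $1$ means ``select $v_i$'' and kills every $v^{(i, k)}_j$ with $v_iv_j \in E$; reading $0$ at layer $i = j$ kills $v^{(i, k)}_i$; other live states pass to $v^{(i+1, k)}_j$, and $v^{(p+1, k)}_j$ is identified with $f^{(k)}$, on which both letters self-loop.

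The new ingredient beyond $A_{main}$ is a \emph{delay chain} on the dead states: both letters send $u^{(i, k)}_j$ to $u^{(i+1, k)}_j$ for $i < p$, and $u^{(p, k)}_j$ is identified with $v^{(1, k+1)}_j$ for $k < p$, while $u^{(p, p)}_j$ is a self-loop. This guarantees that any vertex killed at any internal layer of gadget $k$ arrives at the starting state $v^{(1, k+1)}_j$ exactly when gadget $k+1$ begins processing, independent of the layer in which it died. Weak acyclicity then follows from ordering states by $(k, i)$ with tie-breaking by type ($v$ before $u$) and by $j$; the alphabet is $\{0, 1\}$ throughout, and a direct count gives $n = \Theta(p^3)$.

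With $S = \{v^{(1, 1)}_j \mid 1 \le j \le p\}$, I would prove that the rank of $S$ equals $\chi(G)$. The upper bound is immediate: for a proper coloring $I_1, \ldots, I_{\chi(G)}$, feeding gadget $k$ the length-$p$ word whose $i$-th bit is $1$ iff $v_i \in I_k$ sends $S$ to $\{f^{(1)}, \ldots, f^{(\chi(G))}\}$. For the lower bound, the image of $S$ under any word decomposes as the set of reached $f^{(k)}$'s together with, for every uncovered vertex $v_j$, its unique current state indexed by $j$ (so distinct $j$'s contribute distinct states). Since any set of vertices reaching a common $f^{(k)}$ must form an independent set (two adjacent vertices with selected letters $1$ in the same gadget kill each other), the covered subgraph $G[V \setminus U]$ is $c$-colorable, where $c$ is the number of reached sinks and $U$ is the set of uncovered vertices. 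Using $\chi(G[V \setminus U]) \ge \chi(G) - |U|$, the image size is $c + |U| \ge \chi(G)$.

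The main obstacle will be carefully handling arbitrary words whose length need not be a multiple of $p$, so that a vertex stranded mid-gadget is still accounted for as a distinct state indexed by $j$, and verifying that the delay chains correctly synchronize timing regardless of the layer at which each vertex was killed, including the identification $u^{(p, k)}_j = v^{(1, k+1)}_j$. Once this is done, the $O(p^{1 - \epsilon})$ inapproximability of {\sc Chromatic Number} translates through the gap-preserving reduction into the claimed $O(n^{\frac{1}{3} - \epsilon})$ inapproximability of {\sc Set Rank} for binary weakly acyclic $n$-state automata.
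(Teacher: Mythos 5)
Your construction is essentially the paper's own: the paper likewise chains $p$ synchronizing gadgets (each a layered $A_{main}$-style independent-set selector on states $v^{(k)}_{i,j}$) through $p$ waiting gadgets of states $u^{(k)}_{i,j}$ that delay killed tokens until the next gadget begins, takes $S$ to be the first layer of the first gadget, and argues that the rank of $S$ equals $\chi(G)$ with $n=\Theta(p^3)$. Your identification $u^{(p,k)}_j = v^{(1,k+1)}_j$ (versus the paper's explicit extra transition) and your slightly more explicit lower-bound accounting via $c+|U|\ge\chi(G)$ are only cosmetic refinements of the same gap-preserving reduction from {\sc Chromatic Number}.
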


\begin{proof} To prove this theorem we construct a gap-preserving reduction from the {\sc Chromatic Number} problem, extending the proof of the previous theorem.
	
	Given a graph $G = (V, E), V = \{v_1, v_2, \ldots, v_p\}$, we construct an automaton $A =(Q, \{0, 1\}, \delta)$. In our reduction we use two kinds of gadgets: $p$ {\em synchronizing gadgets} $T^{(k)}$, $1 \le k \le p$, and $p$ {\em waiting gadgets} $R^{(k)}$, $1 \le k \le p$. Gadget $T^{(k)}$ consists of a set $\{v_{i,j}^{(k)} \mid 1 \le i,j \le p\}$  of states, together with a state $f^{(k)}$, and $R^{(k)}$, $1 \le k \le p$, consists of the set $\{u_{i,j}^{(k)} \mid 1 \le i, j \le p\}$.
	
	For each $i, j, k$, $1 \le i, j, k \le p$, the transition function $\delta$ is defined as:
	
	\[ \delta(v^{(k)}_{i, j}, 0) = \left\{ 
	\begin{array}{l l}
	u^{(k)}_{i,j} & \quad \mbox{if $i = j$,}\\
	v^{(k)}_{i + 1,j} & \quad \mbox{otherwise}\\
	\end{array} \right. \]
	
	\[ \delta(v^{(k)}_{i,j}, 1) = \left\{ 
	\begin{array}{l l}
	u^{(k)}_{i,j} & \quad \mbox{if there is an edge $v_i v_j \in E$,}\\
	v^{(k)}_{i + 1,j} & \quad \mbox{otherwise}\\
	\end{array} \right. \]
	
	Here all $v^{(k)}_{p + 1,j}, 1 \le j \le p$, coincide with $f^{(k)}$. We set $\delta(u^{(k)}_{i,j}, x) = u^{(k)}_{i + 1,j}$ for $x \in \{0, 1\}$, $1 \le i, k \le p - 1$, $1 \le j \le p$, and $\delta(u^{(k)}_{p, j}, x) = v^{(k + 1)}_{1,j}$ for $1 \le j \le p$, $1 \le k \le p - 1$, $x \in \{0, 1\}$. The states $u^{(p)}_{i,j}$ are sink states: both letters $0$ and $1$ act on them as self-loops. Finally, we set $S = \{v^{(1)}_{1, j} \mid 1 \le j \le p\}$. Figure \ref{fig-Rank} gives an idea of the described construction.
	
		\setlength{\unitlength}{2.2pt}
		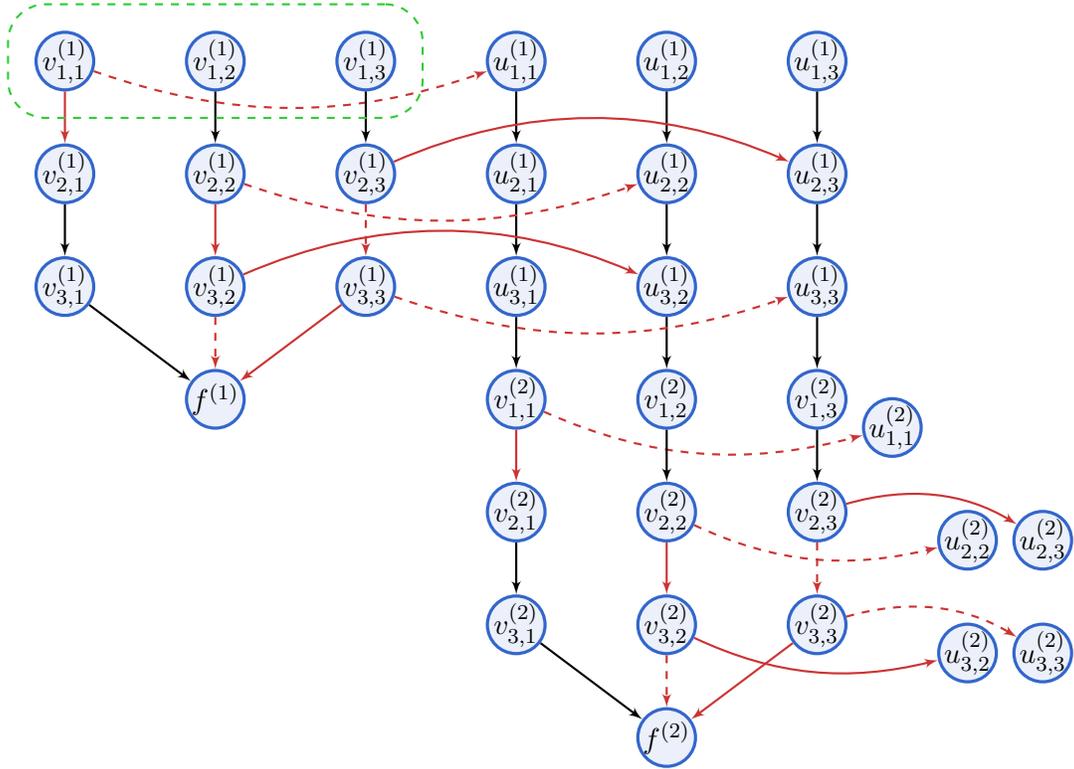
\begin{figure}[hbt]
			\begin{center}
				\begin{tikzpicture}[->,>=latex',
				vertex/.style={circle, draw=black, fill, minimum width=1.5mm, inner sep=0pt, outer sep=0pt},
				every label/.style={inner sep=0pt, minimum width=0pt, label distance=0.1mm},
				yscale=-1.5,
				xscale=2,
				thick
				]
				\graph[nodes=arn_n, empty nodes, no placement] {
					{
						v113[x=3,y=3,label=center:$v^{(2)}_{1, 1}$] -> [edge, bred]
						v123[x=3,y=4,label=center:$v^{(2)}_{2, 1}$] -> [edge]
						v133[x=3,y=5,label=center:$v^{(2)}_{3, 1}$] -> [edge]
						f3[x=4,y=6,label=center:$f^{(2)}$]
					};
					{
						v113 -> [edge, swap, bend left=25, pos=0.25, dashed, bred]
						u113[x=5.5,y=3.25,label=center:$u^{(2)}_{1, 1}$]
					};
					{
						v213[x=4,y=3,label=center:$v^{(2)}_{1, 2}$] -> [edge]
						v223[x=4,y=4,label=center:$v^{(2)}_{2, 2}$] -> [edge, bred]
						v233[x=4,y=5,label=center:$v^{(2)}_{3, 2}$] -> [edge, bred, dashed]
						f3
					};
					{
						v223 -> [edge, bend left=25, pos=0.25, bred, dashed, swap]
						u223[x=6,y=4.25,label=center:$u^{(2)}_{2, 2}$]
					};
					{
						v313[x=5,y=3,label=center:$v^{(2)}_{1, 3}$] -> [edge]
						v323[x=5,y=4,label=center:$v^{(2)}_{2, 3}$] -> [edge, bred, dashed]
						v333[x=5,y=5,label=center:$v^{(2)}_{3, 3}$] -> [edge,swap, bred]
						f3
					};
					{
						v323 -> [edge, bend right, bred]
						u323[x=6.5,y=4.25,label=center:$u^{(2)}_{2, 3}$]
					};
					{
						v233 -> [edge, bend left=25, bred, swap]
						u233[x=6,y=5.25,label=center:$u^{(2)}_{3, 2}$]
					};
					{
						v333 -> [edge, swap, bend right, bred, dashed]
						u333[x=6.5,y=5.25,label=center:$u^{(2)}_{3, 3}$]
					};

					{
						v112[x=3,y=0,label=center:$u^{(1)}_{1, 1}$] -> [edge]
						v122[x=3,y=1,label=center:$u^{(1)}_{2, 1}$] -> [edge]
						v132[x=3,y=2,label=center:$u^{(1)}_{3, 1}$] -> [edge]
						v113
					};
					{
						v212[x=4,y=0,label=center:$u^{(1)}_{1, 2}$] -> [edge]
						v222[x=4,y=1,label=center:$u^{(1)}_{2, 2}$] -> [edge]
						v232[x=4,y=2,label=center:$u^{(1)}_{3, 2}$] -> [edge]
						v213
					};
					{
						v312[x=5,y=0,label=center:$u^{(1)}_{1, 3}$] -> [edge]
						v322[x=5,y=1,label=center:$u^{(1)}_{2, 3}$] -> [edge]
						v332[x=5,y=2,label=center:$u^{(1)}_{3, 3}$] -> [edge]
						v313
					};
					
					{
						v11[x=0,y=0,label=center:$v^{(1)}_{1, 1}$] -> [edge, bred]
						v12[x=0,y=1,label=center:$v^{(1)}_{2, 1}$] -> [edge]
						v13[x=0,y=2,label=center:$v^{(1)}_{3, 1}$] -> [edge]
						f[x=1,y=3,label=center:$f^{(1)}$]
					};
					{
						v11 -> [edge, swap, bend left=25, bred, dashed]	v112
					};
					{
						v21[x=1,y=0,label=center:$v^{(1)}_{1, 2}$] -> [edge]
						v22[x=1,y=1,label=center:$v^{(1)}_{2, 2}$] -> [edge, bred]
						v23[x=1,y=2,label=center:$v^{(1)}_{3, 2}$] -> [edge, bred, dashed]
						f
					};
					{
						v22 -> [edge, bend left=25, bred, dashed] v222
					};
					{
						v31[x=2,y=0,label=center:$v^{(1)}_{1, 3}$] -> [edge]
						v32[x=2,y=1,label=center:$v^{(1)}_{2, 3}$] -> [edge, bred, dashed]
						v33[x=2,y=2,label=center:$v^{(1)}_{3, 3}$] -> [edge,swap, bred]
						f
					};
					{
						v32 -> [edge,swap, bend right, bred]	v322
					};
					{
						v23 -> [edge, bend right, bred]	v232
					};
					{
						v33 -> [edge, swap, bend left=25, dashed, bred]	v332
					};
				};
				\node[rectangle,dashed,draw,fit=(v11)(v21)(v31),
				rounded corners=5mm,inner sep=10pt, bgreen] {};
				
				\end{tikzpicture}\textsl{}
				\caption{A part of the construction in the reduction for {\sc Set Rank}. Red dashed arrows represent transitions for letter $0$, red solid arrows -- for letter $1$, black arrows -- for both letters. Self-loops are omitted.} \label{fig-Rank}
			\end{center}
		\end{figure}
	
	The idea of the presented construction is essentially a combination of the ideas in the proofs of Theorems \ref{thm-inapprox-gen} and \ref{thm-rank}, so we provide only a sketch of the proof. A synchronizing gadget $T^{(k)}$ synchronizes a set $S^{(k)} \subseteq S$ of states corresponding to some independent set in $G$. All the states corresponding to the vertices adjacent to vertices corresponding to $S^{(k)}$ are mapped to the corresponding waiting gadget $R^{(k)}$, and get to the next synchronizing gadget $T^{(k + 1)}$ only after the states of $S^{(k)}$ are synchronized (and thus mapped to $f^{(k)}$). Hence, the minimum size of a partition of $V$ into independent sets is equal to the rank of $S$.  The number of states in $A$ is $O(p^3)$. Thus, we get $\Theta(n^{\frac{1}{3} - \epsilon})$ inapproximability.
	 \end{proof}



\section{Subset Synchronization}\label{sec-subset}

In this section, we obtain complexity results for several problems related to subset synchronization in weakly acyclic automata. We adapt Eppstein's construction from \cite{Eppstein1990}, which is a powerful and flexible tool for such proofs. We shall need the following NP-complete {\sc SAT} problem \cite{Sipser2006}.

\begin{tabular}{||p{32em}}
	~{\sc SAT} \\
	~{\em Input}: A set $X$ of $n$ boolean variables and a set $C$ of $m$ clauses;\\
	~{\em Output}: Yes if there exists an assignment of values to the variables in $X$ such that all clauses in $C$ are satisfied, No otherwise.
\end{tabular}

\begin{theorem}\label{thm-ac-sync-set}
	The {\sc Sync Set} problem in binary weakly acyclic automata is NP-complete.
\end{theorem}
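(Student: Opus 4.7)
Membership in NP follows at once from Proposition~\ref{thm-ac-subs-short-word}: any synchronizing subset $S$ in a weakly acyclic $n$-state automaton admits a synchronizing word of length $O(n^2)$, so a polynomially-sized word is a succinct certificate that can be verified by simulating the automaton. The harder direction is NP-hardness, which I plan to establish by a reduction from \textsc{SAT}, adapting the layered construction of Eppstein~\cite{Eppstein1990} so that the resulting automaton is binary and weakly acyclic.

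Given a \textsc{SAT} instance with variables $x_1,\dots,x_n$ and clauses $c_1,\dots,c_m$, I would build an automaton on alphabet $\{0,1\}$ with states $q_{c,i}$ for each clause $c$ and each $i\in\{0,1,\dots,n\}$, together with two fresh states $f$ and $g$ whose two transitions are self-loops. For $i\in\{1,\dots,n\}$ and $b\in\{0,1\}$, set $\delta(q_{c,i-1},b)=f$ whenever the literal assignment $x_i\mapsto b$ satisfies $c$ (equivalently, $b=1$ and $x_i$ appears positively in $c$, or $b=0$ and $\neg x_i$ appears in $c$), and $\delta(q_{c,i-1},b)=q_{c,i}$ otherwise; finally let $\delta(q_{c,n},b)=g$ for both letters. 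The subset to synchronize is $S=\{q_{c_j,0}\mid 1\le j\le m\}\cup\{f\}$. Since every non-sink transition moves from layer $i$ to layer $i+1$ or directly to a sink, the automaton is weakly acyclic; it is binary and of size $m(n+1)+2$, so the reduction is polynomial.

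The correctness argument I would write is then routine. The sink $f$ lies in $S$, so any synchronizing word must collapse $S$ onto $f$. Reading the first $n$ letters $b_1\dots b_n$ of any candidate word $w$ as an assignment $\alpha(x_i)=b_i$, an easy induction on $i$ shows that the image of $q_{c,0}$ reaches $f$ within those $n$ steps iff $\alpha$ satisfies $c$; otherwise that image has arrived at $q_{c,n}$ and is pushed into $g$ by the very next letter, after which it is permanently separated from $f$. Consequently, $S$ is synchronizable iff the formula admits a satisfying assignment. The main obstacle in writing this up cleanly is ruling out ``unintended'' synchronizing words, for instance words of length other than $n$ or words that somehow bypass the trap $g$; the interplay between the two sinks $f$ and $g$ and the requirement $f\in S$ is precisely what pins the word down to a genuine truth assignment and forces the equivalence to be tight.
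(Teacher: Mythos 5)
Your proposal is correct and follows essentially the same route as the paper: NP membership via Proposition~\ref{thm-ac-subs-short-word}, and NP-hardness by a reduction from {\sc SAT} with one chain of $n+1$ states per clause that jumps to a common sink $f$ exactly when a letter, read as a truth value, satisfies the clause. The only (harmless) difference is that you merge the per-clause terminal sinks into a single trap $g$ and therefore must add $f$ to $S$ to exclude synchronization at $g$, whereas the paper keeps a separate sink $y^{(j)}_{n+1}$ for each clause and takes $S$ without $f$.
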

\begin{proof}
	Because of the polynomial upper bound on the length of a shortest word synchronizing a subset of states proved in Proposition \ref{thm-ac-subs-short-word}, we can use such word as a certificate. Thus, the problem is in NP.
	
	We reduce the {\sc SAT} problem. Given $X$ and $C$, we construct an automaton $A =(Q, \{0, 1\}, \delta)$. For each clause $c_j$, we construct $n + 1$ states $y^{(j)}_i, 1 \le i \le n + 1$, in $Q$. We introduce also a state $f \in Q$. The transitions from $y^{(j)}_i$ correspond to the occurrence of $x_i$ in $c_j$ in the following way: for $1 \le i \le n$, $1 \le j \le m$, $\delta(y^{(j)}_i, a) = f$ if the assignment $x_i = a$, $a \in \{0, 1\}$, satisfies $c_j$, and $\delta(y^{(j)}_i, a) = y^{(j)}_{i + 1}$ otherwise. The transition function $\delta$ also maps $y^{(j)}_{n + 1}$ to itself for all $1 \le j \le m$ and both letters $0$ and $1$.
	
	Let $S = \{y^{(j)}_1 \mid 1 \le j \le m\}$. The word $w = a_1 a_2 \ldots a_n$ synchronizes $S$ if $a_i$ is the value of $x_i$ in an assignment satisfying $C$, and vice versa. Thus, the set is synchronizing if and only if all clauses in $C$ can be satisfied by some assignment of binary values to the variables in $X$. 
	 \end{proof}

By identifying the states $y^{(j)}_{n + 1}$ for $1 \le j \le m$ and adding $f$ to $S$ it is also possible to prove that the problem of checking whether the rank of a subset of states equals the rank of an automaton is coNP-complete for binary weakly acyclic automata (cf. the remarks in the beginning of Section \ref{sec-rank}).

The proof of Theorem \ref{thm-ac-sync-set} can be used to prove the hardness of a special case of the following problem, which is PSPACE-complete in general \cite{Kozen1977} and NP-complete for weakly acyclic monotonic automata over a three-letter alphabet \cite{Ryzhikov2017Monotonic}.

\begin{tabular}{||p{32em}}
	~{\sc Finite Automata Intersection} \\
	~{\em Input}: Automata $A_1, \ldots, A_k$ (with initial and accepting states); \\
	~{\em Output}: Yes if there is a word which is accepted by all automata, No otherwise.
\end{tabular}

\begin{proposition}
	{\sc Finite Automata Intersection} is NP-complete when all automata in the input are binary weakly acyclic.
\end{proposition}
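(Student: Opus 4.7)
The plan is to adapt the construction in the proof of Theorem \ref{thm-ac-sync-set} and complement it with a short-certificate argument tailored to weakly acyclic automata. For membership in NP, suppose $w = w_1 \dots w_t$ is accepted by every input automaton $A_1, \dots, A_k$, and let $s_i^{(j)}$ denote the state of $A_i$ after reading $w_1 \dots w_j$. If some letter $w_j$ satisfies $\delta_i(s_i^{(j-1)}, w_j) = s_i^{(j-1)}$ for every $i$, then deleting it from $w$ leaves every $s_i^{(t)}$ unchanged, so the resulting shorter word is still accepted by all $A_i$. A shortest accepted word therefore has the property that every letter advances the state of at least one $A_i$. Since a weakly acyclic $n_i$-state automaton admits at most $n_i - 1$ non-self-loop transitions along any run, we obtain $|w| \le \sum_{i=1}^k (n_i - 1)$, which is polynomial in the input size and provides a polynomial certificate.

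For NP-hardness, I reduce from {\sc SAT} using essentially the construction of Theorem \ref{thm-ac-sync-set}. Given variables $X = \{x_1, \dots, x_n\}$ and clauses $C = \{c_1, \dots, c_m\}$, construct one binary weakly acyclic automaton $A_j$ per clause $c_j$. The states of $A_j$ are $y^{(j)}_1, \dots, y^{(j)}_{n+1}, f_j$, the initial state is $y^{(j)}_1$, and the sole accepting state is $f_j$. For $1 \le i \le n$ and $a \in \{0, 1\}$, let $\delta_j(y^{(j)}_i, a) = f_j$ if the assignment $x_i = a$ satisfies $c_j$, and $\delta_j(y^{(j)}_i, a) = y^{(j)}_{i+1}$ otherwise; both $y^{(j)}_{n+1}$ and $f_j$ carry self-loops on each letter. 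The ordering $y^{(j)}_1, \dots, y^{(j)}_{n+1}, f_j$ is a topological sort, so $A_j$ is weakly acyclic and of course binary.

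For correctness, observe that because $f_j$ is absorbing, $A_j$ accepts a word $w = a_1 \dots a_n$ iff some coordinate $a_i$ witnesses satisfaction of $c_j$, i.e., $x_i = a_i$ satisfies $c_j$. Hence $w$ is accepted by every $A_j$ iff $(a_1, \dots, a_n)$ is a satisfying assignment for $C$. Any accepted word of length less than $n$ can be padded with zeros, and any accepted word of length greater than $n$ can be truncated to length $n$, without losing acceptance (since $y^{(j)}_{n+1}$ and $f_j$ absorb extra letters), so satisfiability of $C$ is equivalent to the existence of a word accepted by every $A_j$. The main subtlety is the NP membership step: the product-automaton view of the intersection problem has exponentially many states, so the polynomial certificate emerges only after deleting letters that are uniform self-loops across all components.
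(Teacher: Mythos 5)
Your proof is correct and follows essentially the same route as the paper's: NP membership via the observation that every letter of a shortest accepted word must move at least one automaton forward in its topological sort (giving a bound linear in the total number of states), and NP-hardness by splitting the clause gadgets of Theorem \ref{thm-ac-sync-set} into one binary weakly acyclic automaton per clause with initial state $y^{(j)}_1$ and accepting state $f$. Your write-up merely makes explicit two details the paper leaves implicit (deleting letters that are self-loops in every component, and the padding/truncation to length $n$), which is fine.
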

\begin{proof} Observe first that if there exists a word which is accepted by all automata, then a shortest such word $w$ has length at most linear in the total number of states in all automata. Indeed, for each automaton consider a topological sort of the set of its states. Each letter of $w$ maps at least one state in some automaton to some other state, which has larger index in the topological sort of the set of states of this automaton. Thus, the considered problem is in NP.
	
	For the hardness proof, we use the same construction as in Theorem \ref{thm-ac-sync-set}. Provided $X$ and $C$, define $A$ in the same way as in Theorem \ref{thm-ac-sync-set}. Define $A_j = (Q_j, \{0, 1\}, \delta_j)$ as follows. Take $Q_j = \{y^{(j)}_i, 1 \le i \le n + 1 \} \cup \{f\}$ and $\delta_j$ to be the restriction of $\delta$ to the set $Q_j$. Set $y^{(j)}_1$ to be the input state and $f$ to be the only accepting state of $A_j$. Then there exists a word accepted by automata $A_1, \ldots, A_m$ if and only if all clauses in $C$ are satisfiable by some assignment.
	 \end{proof}



To obtain the next results, we shall need a modified construction of the automaton from the proof of Theorem \ref{thm-ac-sync-set}, as well as some new definitions. A {\em partial automaton} is a triple $(Q, \Sigma, \delta)$, where $Q$ and $\Sigma$ are the same as in the definition of a finite deterministic automaton, and $\delta$ is a partial transition function (i.e., a transition function which may be undefined for some argument values). Given an instance of the {\sc SAT} problem, construct a partial automaton $A_{base} = (Q, \{0, 1\}, \delta)$ as follows. We introduce a state $f \in Q$. For each clause $c_j$, we construct $n + 1$ states $y^{(j)}_i, 1 \le i \le n + 1$, in $Q$. For each $c_j$, construct also states $z^{(j)}_i$ for $h_i + 1 \le i \le n + 1$, where $h_i$ is the smallest index of a variable occurring in $c_j$. The transitions from $y^{(j)}_i$ correspond to the occurrence of $x_i$ in $c_j$ in the following way: for $1 \le i \le n$, $\delta(y^{(j)}_i, a) = z^{(j)}_{i + 1}$ if the assignment $x_i = a$, $a \in \{0, 1\}$, satisfies $c_j$, and $\delta(y^{(j)}_i, a) = y^{(j)}_{i + 1}$ otherwise. For $x \in \{0, 1\}$, we set $\delta(z^{(j)}_i, a) = z^{(j)}_{i + 1}$ for $h_i + 1 \le i \le n$, $1 \le j \le m$, $a \in \{0, 1\}$. The transition function $\delta$ also maps $z^{(j)}_{n + 1}$, $1 \le j \le m$, and $f$ to $f$ for both letters $0$~and~$1$.

A word $w$ is said to {\em carefully synchronize} a partial automaton $A$ if it maps all its states to the same state $q$, and each mapping corresponding to a prefix of $w$ is defined for each state. The automaton $A$ is then called {\em carefully synchronizing}. We use $A_{base}$ to prove the hardness of the following problem.

\begin{tabular}{||p{30em}}
	~{\sc Careful Synchronization} \\
	~{\em Input}: A partial automaton $A$; \\
	~{\em Output}: Yes if $A$ is carefully synchronizing, No otherwise.
\end{tabular}

For binary automata, {\sc Careful Synchronization} is PSPACE-complete \cite{Martyugin2010}. For monotonic automata over a four-letter alphabet it is NP-hard. We call a partial automaton {\em aperiodic} if for any word $w \in \Sigma^*$ and any state $q \in Q$ there exists $k$ such that either $\delta(q, w^k)$ is undefined, or $\delta(q, w^k) = \delta(q, w^{k + 1})$.

\begin{theorem}\label{thm-careful}
	{\sc Careful Synchronization} is NP-hard for aperiodic partial automata over a three-letter alphabet.
\end{theorem}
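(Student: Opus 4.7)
The plan is to reduce from \textsc{SAT}, building on the partial automaton $A_{base}$ already constructed in the excerpt. Given a formula with variables $x_1, \dots, x_n$ and clauses $c_1, \dots, c_m$, I would extend $A_{base}$ over $\{0,1\}$ to a partial aperiodic automaton $A$ over a three-letter alphabet $\{0,1,\sharp\}$ by adding a ``reset/cleanup'' letter $\sharp$ together with whatever auxiliary states (e.g.\ a short counter gadget) are needed to pin down the shape of any careful synchronizing word. The intention is that $\sharp$ should gather the intermediate states $y^{(j)}_i$ (for $i>1$) and $z^{(j)}_i$ into a safe configuration (either back at the chain starts $y^{(j)}_1$ or directly at $f$), while crucially being \emph{unable} on its own to synchronize $A$.

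The key steps I would carry out, in order, are: (1) define the transitions of $\sharp$ on every state of $A_{base}$ together with the auxiliary gadget, taking care that $\sharp$ is defined on all the intermediate chain states (so it can be applied at the start) but its effect on the ``active'' states $y^{(j)}_1$ does not send them to $f$; (2) prove the forward direction: given a satisfying assignment $(a_1,\dots,a_n)$, exhibit an explicit careful synchronizing word of the form $\sharp\, a_1 a_2 \cdots a_n\, \sharp$ (possibly with one more letter to mop up $z^{(j)}_{n+1}$), and verify that along every prefix all transitions are defined on every state; (3) prove the backward direction: if $A$ is carefully synchronizing, then the sync word must contain a binary subword encoding a satisfying assignment; (4) verify that $A$ is aperiodic: under $\{0,1\}$ the transitions form an essentially acyclic structure (chains moving monotonically toward $f$ or to dead-ends), and $\sharp$ only sends states toward $y^{(j)}_1$ or toward $f$, so iterating any word $w$ sends each state eventually either to $f$ (a self-loop for every letter) or to an undefined transition; (5) check that $|A|$ is polynomial in $n+m$ and the alphabet has exactly three letters.

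The main obstacle will be step (3): ruling out ``cheat'' synchronizing strategies that do not correspond to a single satisfying assignment. The natural worry is that a word interleaving $\sharp$ and binary blocks might satisfy different clauses in different rounds, effectively using different assignments per clause. To prevent this I expect to need an auxiliary gadget that bounds the length of the synchronizing word (for instance a counter chain of length $\Theta(n)$ on which both $0,1$ and $\sharp$ advance one step, dead-ending after $n+1$ moves), so that only a word of the shape $\sharp\, b_1\cdots b_n$ (plus possibly one mop-up letter) can be careful and synchronizing. With such a length bound, the backward direction reduces to showing that the binary block $b_1\cdots b_n$ must simultaneously satisfy every clause, which is exactly SAT. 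A secondary concern in step (4) is that adding $\sharp$ does not accidentally introduce a nontrivial cycle through the gadget, so the aperiodicity verification must be done carefully for mixed words, but given the monotone nature of both the chains in $A_{base}$ and the counter, this should reduce to a direct case analysis.
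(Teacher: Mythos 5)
Your overall plan---reduce from {\sc SAT} by extending $A_{base}$ with one extra ``reset'' letter---is exactly the route the paper takes, but as written the proposal has a genuine gap at its central step: you never define the transitions of $\sharp$, and you explicitly leave open whether the backward direction (step~(3)) can be made to work, proposing an auxiliary counter gadget to bound the length of a careful synchronizing word. That extra gadget is both unspecified and unnecessary, and it is itself delicate (a chain that ``dead-ends after $n+1$ moves'' would block careful synchronization outright unless its endpoint is routed to $f$, which you would then have to argue does not reopen the cheating you are trying to prevent).

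The missing idea is the choice of the reset letter's action: define $r$ (your $\sharp$) by $\delta(y^{(j)}_i,r)=\delta(z^{(j)}_i,r)=y^{(j)}_1$ for \emph{all} $i$, and $\delta(f,r)=f$, together with $m$ fresh states $s^{(j)}$ on which only $r$ is defined (forcing the first letter to be $r$); all other $r$-transitions are left undefined. Because $r$ resets the $z$-chains as well as the $y$-chains, no partial progress can be ``banked'' between resets, so the worry about satisfying different clauses with different assignments in different rounds evaporates without any length-bounding device. Concretely: the only transitions into $f$ are from the states $z^{(j)}_{n+1}$, and the only letter defined at $y^{(j)}_{n+1}$ is $r$; hence after the last occurrence of $r$ in a careful synchronizing word, a single binary block must drive \emph{every} gadget onto its $z$-chain and through $z^{(j)}_{n+1}$ into $f$, which is precisely a single assignment satisfying all clauses (conversely, $r\,a_1\cdots a_n\,0$ works for a satisfying assignment---note the word must end with a binary letter, not with the reset letter, since $r$ would undo the $z^{(j)}_{n+1}$ states). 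Aperiodicity then follows because every cycle that is not a self-loop uses $r$, and after the last $r$ in any word $w$ the surviving image of each gadget is a single state depending only on the suffix of $w$, so $\delta(q,w^k)$ either becomes undefined or stabilizes after at most two iterations. Your step~(4) sketch is fine in spirit, but without fixing the $\sharp$-transitions as above it cannot be checked.
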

\begin{proof}
	We reduce the {\sc SAT} problem. Given $X$ and $C$, we first construct $A_{base}$. Then we add an additional letter $r$ to the alphabet of $A_{base}$ and introduce $m$ new states $s^{(m)}$. For $1 \le i \le n$, $1 \le j \le m$, we define $\delta(s^{(j)}, r) = y^{(j)}_1$, $\delta(y^{(j)}_i, r) = y^{(j)}_1$, $\delta(z^{(j)}_i, r) = y^{(j)}_1$, $\delta(f, r) = f$. All other transitions are left undefined. Let us call the constructed automaton~$A$.
	
	The automaton $A$ is carefully synchronizing if and only if all clauses in $C$ can be satisfied by some assignment of binary values to the variables in $X$. Moreover, the word $w = r w_1 w_2 \ldots w_n 0$, is carefully synchronizing if $w_i$ is the value of $x_i$ in such an assignment.
	
	Indeed, note that the first letter of $w$ is necessarily $r$, as it is the only letter defined for all the states. Moreover, each word starting with $r$ maps $Q$ to a subset of $\{y^{(j)}_i, z^{(j)}_i \mid 1 \le j \le m + 1\} \cup \{f\}$. The only way for a word to map all states to $f$ is to map them first to the set $\{z^{(j)}_{n + 1} \mid 1 \le j \le m\}$, because there are no transitions defined from any $y^{(j)}_{n + 1}$, except the transitions defined by $r$. But this exactly means that there exists an assignment satisfying $C$.
	
	The constructed automaton is aperiodic, because each cycle which is not a self-loop contains exactly one letter $r$.
	 \end{proof}

The complexity of the following problem can be obtained from Theorem \ref{thm-careful}.

\begin{tabular}{||p{32em}}
	~{\sc Positive Matrix}\\
	~{\em Input}: A set $M_1, \ldots, M_k$ of $n \times n$ binary matrices; \\
	~{\em Output}: Yes if there exists a sequence $M_{i_1} \times \ldots \times M_{i_k}$ of multiplications (possibly with repetitions) providing a matrix with all elements equal to $1$, No~otherwise.
\end{tabular}

\begin{corollary}
	{\sc Positive Matrix} is NP-hard for two upper-triangular and two lower-triangular matrices.
\end{corollary}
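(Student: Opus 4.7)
The plan is to reduce from {\sc Careful Synchronization} for aperiodic partial automata over a three-letter alphabet, which is NP-hard by Theorem~\ref{thm-careful}. Given such an automaton $A$ over $\{0, 1, r\}$, I would encode each letter $x$ as the $|Q| \times |Q|$ binary transition matrix $M_x$ with $M_x[i,j] = 1$ iff $\delta(q_i, x) = q_j$, and introduce a fourth ``detector'' matrix $F$ whose only nonzero row is row $f$ (the sink of $A$), set to all ones.

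To achieve the required triangularity pattern, I would order the states so that $y^{(j)}_1$ for $1 \le j \le m$ come first, then all $y^{(j)}_i, z^{(j)}_i$ with $i \ge 2$ in order of increasing $i$, then the $s^{(j)}$'s, with $f$ placed last. Under this ordering the letters $0$ and $1$ move every state strictly forward in $A_{base}$, so $M_0, M_1$ are upper-triangular. The letter $r$ sends each of $y^{(j)}_i, z^{(j)}_i, s^{(j)}$ back to the earlier state $y^{(j)}_1$ and is a self-loop on $f$, making $M_r$ lower-triangular. Since $f$ is last, the matrix $F$ is also lower-triangular, so the construction indeed uses two upper-triangular ($M_0, M_1$) and two lower-triangular ($M_r, F$) matrices.

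For correctness I would show that $A$ is carefully synchronizing iff some product of $M_0, M_1, M_r, F$ equals the all-ones matrix. The forward direction is immediate: if $w$ is a careful synchronizing word, then $M_w$ has column $f$ all ones and zeros elsewhere, so $(M_w F)[i,j] = M_w[i,f]\,F[f,j] = 1$ for every $i,j$. For the converse I would pass to the corresponding binary relations on $Q$: each $M_x$ is a partial function, while $R_F = \{f\} \times Q$. Three properties must hold for a product $P$ with $R_P = Q \times Q$: (i) $F$ must occur in $P$, since a product of partial functions is itself a partial function and so has at most one $1$ per row; (ii) $P$ must end with $F$, because $s^{(j)}$ has no preimage under any letter of $A$ and hence cannot lie in the range of any product terminating in $M_0, M_1$, or $M_r$; (iii) $P$ cannot begin with $F$, since this would restrict the first coordinate of $R_P$ to $\{f\}$. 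Writing $P = A' F R$ where $A'$ is the nonempty maximal prefix of letter matrices and using $R_{A'} \circ R_F \circ R_R = D \times \operatorname{range}(R_R)$ with $D = \{q_i : (q_i,f) \in R_{A'}\}$, the condition $R_P = Q \times Q$ forces $D = Q$, i.e.\ the word corresponding to $A'$ carefully synchronizes $A$ to $f$. An application of Theorem~\ref{thm-careful} then yields a satisfying assignment for the underlying {\sc SAT} instance.

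The main technical subtlety is handling products with multiple occurrences of $F$; this is resolved uniformly by the identity $R_{P_1} \circ R_F \circ R_{P_2} = D_1 \times \operatorname{range}(R_{P_2})$, which shows that any occurrence of $F$ collapses the relation into a rectangular Cartesian product, so analyzing the first occurrence of $F$ alone already suffices to extract the careful synchronizing word.
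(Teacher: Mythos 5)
Your proposal is correct and follows essentially the same reduction as the paper: take the three transition matrices of the automaton from Theorem~\ref{thm-careful} plus the extra matrix that is nonzero only in row $f$, and observe that an all-ones product forces a carefully synchronizing word before the first occurrence of that matrix. Your write-up is in fact more detailed than the paper's sketch (explicit state ordering certifying the triangularity pattern, and the rectangular-collapse argument handling arbitrary placements of $F$), but the underlying idea is identical.
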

\begin{proof}
	The proof uses the idea from \cite{Gerencser2016}. Consider three transition matrices corresponding to the letters of the automaton constructed in the proof of Theorem \ref{thm-careful}. Add the matrix corresponding to the letter mapping the state $f$ to all states and undefined for all other states. Any sequence of matrices resulting in a matrix with only positive elements must contain the new matrix, and before that there must be a sequence of matrices corresponding to a word carefully synhronizing the automaton from the proof of Theorem \ref{thm-careful}. Thus we get a reduction from {\sc Careful Synchronization} for aperiodic partial automata over a three-letter alphabet to {\sc Positive Matrix}. It is easy to see that the reduction uses two upper-triangular and two lower-triangular matrices.
\end{proof}

Finally, we show the hardness of the following problem (PSPACE-complete in general \cite{Bondar2016}).

\begin{tabular}{||p{32em}}
	~{\sc Subset Reachability} \\ 
	~{\em Input}: An automaton $A = (Q, \Sigma, \delta)$ and a subset $S$ of its states; \\
	~{\em Output}: Yes if there exists a word $w$ such that $\{\delta(q, w) \mid q \in Q\} = S$, No otherwise.
\end{tabular}

\begin{theorem}
	{\sc Subset Reachability} is NP-complete for weakly acyclic automata.
\end{theorem}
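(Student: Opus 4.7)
The plan is to show NP-containment via a standard polynomial-witness bound and then reduce from {\sc SAT} using a variant of the clause gadget from Theorem \ref{thm-ac-sync-set} augmented with a length-enforcing counter. For NP-membership, I would argue that in a shortest word $w$ with $\delta(Q, w) = S$, each letter must strictly change the current image set (otherwise it can be deleted without altering the final image), and in a weakly acyclic automaton each such change forces at least one state in the image to move forward in a fixed topological order. Since the sum of topological indices of states in the image is bounded by $n^2$ and strictly increases at every step, $|w| = O(n^2)$, giving a polynomial certificate.

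For NP-hardness, I would reduce from {\sc SAT}. Given variables $X = \{x_1, \ldots, x_n\}$ and clauses $C = \{c_1, \ldots, c_m\}$, I construct a binary automaton $A' = (Q', \{0, 1\}, \delta')$ whose state set consists of clause-gadget states $y^{(j)}_i$ for $j \in [m]$ and $i \in [1, n+1]$ (with transitions exactly as in the proof of Theorem \ref{thm-ac-sync-set}), a counter chain $c_0, c_1, \ldots, c_n$ with $\delta'(c_i, a) = c_{i+1}$ on both letters for $i < n$, and a single sink $f$. I set $\delta'(y^{(j)}_{n+1}, a) = f$ and $\delta'(c_n, a) = f$ for both letters, so that $f$ is the unique true sink. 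The target subset is $S = \{f, c_n\}$. Weak acyclicity is witnessed by the topological order $c_0 < \cdots < c_n < y^{(1)}_1 < \cdots < y^{(m)}_{n+1} < f$ with only a self-loop at $f$, and the construction has $O(mn)$ states.

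The correctness claim is that $\delta'(Q', w) = S$ iff $|w| = n$ and $w$ encodes a satisfying assignment. The counter pinpoints the length: $c_n$ lies in the image only when some $c_i$ maps exactly to it, which requires $|w| \le n$, while forbidding any $c_i$ with $i < n$ from the image forces $|w| \ge n$. Once $|w| = n$ is fixed, every $y^{(j)}_i$ with $i \ge 2$ necessarily ends in $f$---either via a satisfying transition along the way, or by reaching $y^{(j)}_{n+1}$ before the last letter and then being sent to $f$---so these intermediate states contribute nothing outside $S$. The only states that can still derail the image are the $y^{(j)}_1$'s, which land in $f$ precisely when the assignment $a_1 \ldots a_n$ satisfies $c_j$; any unsatisfied clause leaves the corresponding $y^{(j)}_{n+1}$ in the image, outside $S$.

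The main subtlety, which I expect to be the trickiest part of the proof, is handling the intermediate states $y^{(j)}_i$ with $i \ge 2$ so that they neither add spurious elements to the image on non-satisfying words nor prevent the image from equaling $S$ on satisfying ones. Routing $y^{(j)}_{n+1}$ to $f$ (rather than making it a sink, as in Theorem \ref{thm-ac-sync-set}) is the key trick: it absorbs the ``shifted-SAT'' runs starting from $y^{(j)}_i$ with $i \ge 2$ harmlessly into $f$, while leaving the diagnostic run from $y^{(j)}_1$ stranded at $y^{(j)}_{n+1}$ exactly when $c_j$ is unsatisfied. After this observation, the iff follows from a straightforward case analysis on $|w|$.
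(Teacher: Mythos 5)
Your proof is correct, and while the NP-membership argument is essentially the paper's (a potential-function bound via a topological sort), your hardness reduction takes a genuinely different route. The paper reuses its gadget $A_{base}$, in which each clause gadget carries a parallel track of states $z^{(j)}_i$ onto which the token jumps as soon as the clause is satisfied; the target set is $\{z^{(j)}_{n+1} \mid 1 \le j \le m\} \cup \{f\}$, so satisfiability is certified by the \emph{presence} of all the satisfied-clause markers, and the length $|w| = n$ is pinned down implicitly by the requirement that those markers appear. You instead keep the bare clause gadgets of Theorem \ref{thm-ac-sync-set}, add a single global counter $c_0, \dots, c_n$ to force $|w| = n$ explicitly, reroute $y^{(j)}_{n+1}$ into $f$ so that all shifted runs are absorbed, and certify satisfiability by the \emph{absence} of the unsatisfied-clause markers $y^{(j)}_{n+1}$ from the image. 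Your target set $\{f, c_n\}$ has constant size and your construction dispenses with the $z$-states entirely, which makes the correctness argument cleaner and more self-contained; the paper's version is economical in context only because $A_{base}$ is shared with the {\sc Careful Synchronization} proof. I verified your case analysis: the counter forces $|w| = n$ exactly, every $y^{(j)}_i$ with $i \ge 2$ reaches $f$ within $n + 2 - i \le n$ letters, and $y^{(j)}_1$ lands in $f$ iff $w$ satisfies $c_j$.

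One small imprecision in your membership argument: ``the sum of topological indices of states in the image'' is not monotone if read literally as a sum over the image \emph{as a set}, since two states merging can decrease it. The correct potential is the sum over all $q \in Q$ of the index of $\delta(q, w_1 \cdots w_k)$, counted with multiplicity; this is non-decreasing by weak acyclicity, strictly increases whenever the image changes (if no state of the current image moved strictly forward, all would be fixed and the image would be unchanged, so the letter could be deleted), and is bounded by $n^2$. This is exactly the paper's counting of ``mappings,'' and the fix is immediate.
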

\begin{proof} Consider a topological sort of $Q$.  Let $w$ be a shortest word mapping $Q$ to some reachable set of states. Then each letter of $w$ maps at least one state to a state with a larger index in the topological sort. Thus $w$ has length $O(|Q|^2)$, since the maximum total number of such mappings is $(|Q| - 1) + (|Q| - 2) + \ldots + 1 + 0$. Thus, the considered problem is in NP.
	
	For the NP-hardness proof, we again reduce the {\sc SAT} problem. Given an instance of {\sc SAT}, construct $A_{base}$ first. Next, add a transition $\delta(y^{(j)}_{n + 1}, a) = f$ for $1 \le j \le m$, $a \in \{0, 1\}$, resulting in a deterministic automaton~$A$.
	
	Similar to the proof of Theorem \ref{thm-careful}, $C$ is satisfiable if and only if the set $\{z^{(n + 1)}_j \mid 1 \le j \le m\} \cup \{f\}$ is reachable in $A$. 
	 \end{proof}

\section{Conclusions and Open Problems}

As shown in this paper, weakly acyclic automata serve as an example of a small class of automata where most of the synchronization problems are still hard. More precisely, switching from general automata to weakly acyclic usually results in changing a PSPACE-complete problem to a NP-complete one.

Some problems for weakly acyclic automata are still open. One of them is to study the approximability of the {\sc Shortest Sync Word} problem: there is a drastic gap between known inapproximability results and the $O(n)$-approximation algorithm for general automata. Another natural problem is to study the {\sc Max Sync Set} and {\sc Set Rank} problems complexity in strongly connected automata. The technique used by Vorel for proving PSPACE-completeness of the {\sc Sync Set} problem in strongly connected automata seems to fail here.

\subsubsection*{Acknowledgments}

We would like to thank Peter J. Cameron for introducing us to the notion of synchronizing automata, and Vojt\v{e}ch Vorel, Yury Kartynnik, Vladimir Gusev and Ilia Fridman for very useful discussions. We also thank Mikhail V. Volkov and anonymous reviewers for their great contribution to the improvement of the paper.

{\footnotesize
\bibliography{SyncBib}
	\bibliographystyle{alpha}}
	
\end{document}